\newtheorem{example}{Example}
\newtheorem{proposition}{Proposition}
\newtheorem{definition}{Definition}
\newtheorem{lemma}{Lemma}
\newtheorem{remark}{Remark}
\newenvironment{proof}{\noindent\textit{Proof~~}}
{\nolinebreak[4]\hfill$\blacksquare$\\\par}
\title{Quantum Penny Flip game with unawareness}
\author{Piotr Fr\c{a}ckiewicz}
\affil{Institute of Mathematics\\ Pomeranian University, Poland}
\date{\today}
\begin{document}
\maketitle

\begin{abstract}
Games with unawareness model strategic situations in which players' perceptions about the game are limited. They take into account the fact that the players may be unaware of some of the strategies available to them or their opponents as well as the players may have a restricted view about the number of players participating in the game. The aim of the research is to introduce this notion into theory of quantum games. We shall focus on PQ Penny Flip game introduced by D. Meyer. We shall formalize the previous results and consider other cases of unawareness in the game.
\end{abstract}
%%%%%%%%%%%%%%%%%%%%%%%%%%%%%%
%%%%%%%%%%%%%%%%%%%%%%%%%%%%%%
%%%%%%%%%%%%%%%%%%%%%%%%%%%%%%
%%%%%%%%%%%%%%%%%%%%%%%%%%%%%%
%%%%%%%%%%%%%%%%%%%%%%%%%%%%%%
%%%%%%%%%%%%%%%%%%%%%%%%%%%%%%
\section{Introduction}
\label{intro}
Game theory, launched in 1928 by John von Neumann in a paper \cite{neumann1} and developed in 1944 by John von Neumann and Oskar Morgenstern in a book \cite{neumann2} is one of the youngest branches of mathematics. The aim of this theory is mathematical modeling of behavior of rational participants of conflict situations who aim at maximizing their own gain and take into account all possible ways of behaving of remaining participants. Within this young theory new ideas that improve already used models of conflict situations are still proposed. One of the latest trends is to study games with unawareness, i.e., games that describe situations in which a player behaves according to his own view of the game, and considers how all the remaining players view the game. This way of describing a conflict situation goes beyond the most frequently used paradigm, according to which it is assumed that all participants in a game have full knowledge of the situation.

The other, equally young field developed on the border of game theory and quantum information theory is quantum game theory. This is an interdisciplinary area of research within which considered games are supposed to be played with the aid of objects that behave according to the laws of quantum mechanics, and in which non-classical features of these objects are relevant to the way of playing and results of a game.

Game with unawareness is a relatively new notion. The first attempts at formalizing that concept can be found in papers \cite{halpern} and \cite{feinbergstary} published already in XXI century. In paper \cite{feinberg} there is a summary of results obtained in this area till 2012. Quantum counterparts of games with unawareness have not been studied yet. Papers on quantum games with incomplete information concerned only Bayesian games \cite{han}, \cite{iqbalbayesian}, \cite{situ}, \cite{situ2} and games with imperfect recall \cite{cabello}, \cite{fracor1}. Our project is the first attempt to use the notion of game with unawareness in theory of quantum games. 
The main motivation for our interest in developing this branch of quantum game theory was our observation that already in the first paper on quantum games by D. Meyer \cite{meyer} its author unconsciously utilized the idea of game with unawareness. We may conclude from the famous PQ Penny Flip game described in \cite{meyer} that Captain Picard (player 2) agrees to join the game because his chance of winning is 1/2. In other words, the game he perceives is the classical one.  Q (player 1) views the game in a different way. He is aware of unitary strategies. In addition, player 1 knows that player 2 is only aware of the classical strategies. This knowledge is crucial in the way he chooses his strategy. Choosing, for example, the Hadamard matrix always leads player 1 to getting the best possible outcome. It is optimal to player 1 to play that strategy since he is aware that player 2 has no counter strategies available. 
Once we learned the quantum PQ Penny Flip game is a game with unawareness, the description of the game, say by using normal form, requires a family of games rather than a single normal-form game. This has numerous important consequences in the form of solution concepts supposed to predict rational results of the game. In particular, Nash equilibrium concept is not sufficient to fully describe the players' rational choices. In the case of PQ Penny Flip game in which quantum strategies are available only for player 1, each player 2's mixed strategy is an equilibrium strategy. However, taking into account player 2's view about the game (he finds the game to be the classical one), we should predict that he chooses his pure strategies with equal probability. 
%%%%%%%%%%%%%%%%%%%%%%%%%%
%%%%%%%%%%%%%%%%%%%%%%%%%%%%%%%%%%%%%%%%%%%%%%%%%%%%%%%%%%%%%%%%%%%%%%%%%%%%%%%%%%
%%%%%%%%%%%%%%%%%%%%%%%%%%%%%%%%%%%%%%%%%%%%%%%%%%%%%%%%%%%%%%%%%%%%%%%%%%%%%%%%%%%
\section{PQ Penny Flip game}
\label{sec:1}
Formally, the classically played PQ Penny Flip game \cite{meyer} is an example of a two-person extensive-form game whose the game tree is depicted at the top of Fig.\ref{fig:1}. 
\begin{figure}
% Use the relevant command to insert your figure file.
% For example, with the graphicx package use
  \includegraphics[width=1.0\textwidth]{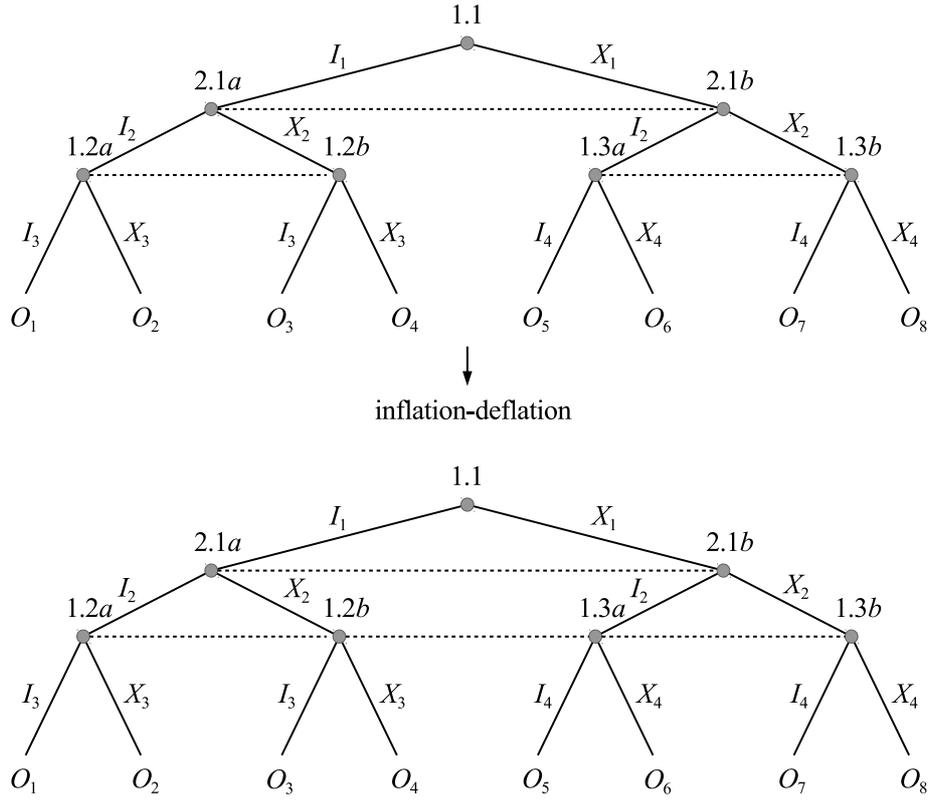}
% figure caption is below the figure
\caption{General extensive form of the classical PQ Penny Flip game (the top figure) and its equivalent counterpart according to the inflation-deflation principle (the bottom figure).}
\label{fig:1}       % Give a unique label
\end{figure}
Player~1 initiates the game by choosing one of the two available actions $I_{1}$ and $X_{1}$. Next, player 2 chooses a possible action in $\{I_{2}, X_{2}\}$. The dashed line connecting vertices 2.1a and 2.1b in Fig.~\ref{fig:1} indicates player 2's two-element information set. This means that player~2 does not know whether player 1 has chosen $I_{1}$ or $X_{1}$. Similarly, player 1 does not know a move made by his predecessor at the time he chooses his second action. As a result, player 1 has two two-element information sets $\{1.2a, 1.2b\}$, $\{1.3a, 1.3b\}$ and one-element information set $\{1.1\}$.

Every extensive-form game can be associated with a strategic-form game. The latter form is particularly convenient when the two-person extensive game is to be studied with respect to Nash equilibria. A strategic-form game is derived from an extensive form-game by determining the set of strategies $S_{i}$ of each player $i$, and the payoffs induced by all strategy profiles in the extensive-form game. A strategy of player $i$ is a function mapping each of her information sets to an element in the set of actions at that information set (see for example \cite{maschler}). In the case of the top game in Fig.~\ref{fig:1} a player 1's strategy is an element of $\{I_{1}, X_{1}\} \times \{I_{3}, X_{3}\} \times \{I_{4}, X_{4}\}$. Hence, the strategic form of the extensive game in Fig.~\ref{fig:1} and its reduced form is as follows:
\begin{equation}
\bordermatrix{& I_{2} & X_{2} \cr 
I_{1}I_{3}I_{4} & O_{1} & O_{3} \cr 
I_{1}I_{3}X_{4} & O_{1} & O_{3} \cr 
I_{1}X_{3}I_{4} & O_{2} & O_{4} \cr
I_{1}X_{3}X_{4} & O_{2} & O_{4} \cr
X_{1}I_{3}I_{4} & O_{5} & O_{7} \cr
X_{1}I_{3}X_{4} & O_{6} & O_{8} \cr 
X_{1}X_{3}I_{4} & O_{5} & O_{7} \cr
X_{1}X_{3}X_{4} & O_{6} & O_{8}}, \quad \bordermatrix{& I_{2} & X_{2} \cr
\{I_{1}I_{3}I_{4}, I_{1}I_{3}X_{4}\} &O_{1} & O_{3} \cr 
\{I_{1}X_{3}I_{4}, I_{1}X_{3}X_{4}\} & O_{2} & O_{4} \cr 
\{X_{1}I_{3}I_{4}, X_{1}X_{3}I_{4}\} & O_{5} & O_{7} \cr
\{X_{1}I_{3}X_{4}, X_{1}X_{3}X_{4}\} & O_{6} & O_{8}}.
\end{equation}
We see at once that player 1 has four strategies that are equivalent to the other four ones (they generate the same outcomes). A strategic-form game in which every set of equivalent strategies is replaced by a single strategy from that set is called a game in reduced strategic form. Hence, the extensive game at the top of Fig.~\ref{fig:1} can be associated with $4\times 2$ reduced strategic form. In this case, we can identify player 1' strategy as a map that specifies one action in $\{1.1\}$ and one action in the union of information sets $\{1.2a, 1.2b\}$ and $\{1.3a, 1.3b\}$. In other words, the meaningful player 1' strategies may be written as $(a_{1}, a_{3})$, where $a_{1}$ and $a_{3}$ are actions taken at the first and the third stage of the game, respectively. It is worth noting that it is still holds if the cardinality of the sets of players' actions is greater than 2. This property will be used throughout the paper and it follows from one of the four transformations preserving the reduced strategic form called {\it Inflation-Deflation} (see \cite{thomson} and \cite{osborne}).
%%%%%%%%%%%%%%%%%%%%%%%%%%%%%%%%%%%%%%%%%%%%%%%%%
\paragraph{Inflation-Deflation} The extensive games $\Gamma$ and $\Gamma'$ share the same reduced strategic-form game if $\Gamma'$ differs from $\Gamma$ only in an information set of some player $i$ in $\Gamma$ that is a union of information sets of player $i$ in $\Gamma'$ ($\{1.2a, 1.2b\}$ and $\{1.3a, 1.3b\}$ in Fig.~\ref{fig:1}) with the following property: any two sequences of actions $h$ and $h'$ leading from the root of the game tree to different members of the union (for example, sequences $(I_{1}, X_{2})$ and $(X_{1}, X_{2})$) have the subsequences that lead to the same information set of player $i$ (empty sequence $\emptyset$ in our case) and player $i$'s action at this information set is different in $h$ and $h'$.
\vspace{12pt}

As it was mentioned at the beginning of this section, the classical Penny Flip game \cite{meyer} is a special case of the game in Fig.~\ref{fig:1}. It is obtained by setting
\begin{equation}
O_{1} = O_{4} = O_{6} = O_{7} = (1,-1), \quad O_{2} = O_{3} = O_{5} = O_{8} = (-1,1).
\end{equation}
On account of the inflation-deflation principle we may write the strategic-form game as
\begin{equation}\label{classic}
\bordermatrix{& I_{2} & X_{2}\cr 
I_{1}I_{3} & (1,-1) & (-1,1)\cr 
I_{1}X_{3} & (-1,1) & (1,-1)\cr 
X_{1}I_{3} & (-1,1) & (1,-1) \cr
X_{1}X_{3} & (1,-1) & (-1,1)}.
\end{equation}
One can check that mixed strategies defined by probability distributions
\begin{equation}\label{cqpf}
\left(\frac{1}{2}, \frac{1}{2}, 0,0\right), \left(\frac{1}{2}, 0, \frac{1}{2},0\right), \left(0,0,\frac{1}{2}, \frac{1}{2}\right), \left(0,\frac{1}{2}, 0, \frac{1}{2}\right)
\end{equation}
over the set $\{I_{1}I_{3}, I_{1}X_{3}, X_{1}I_{3}, X_{1}X_{3}\}$ are the optimal strategies for player 1 in game~(\ref{classic}), and, thus, also each probability distribution over~(\ref{cqpf}). The optimal strategy for player 2 is, in turn, determined by the unique probability distribution $(\frac{1}{2}, \frac{1}{2})$ over $\{I_{2}, X_{2}\}$. Hence, the value of game (\ref{classic}) is equal to zero. 

Meyer \cite{meyer} generalized the PQ Penny Flip game by using quantum computing formalism. The general strategic-form of the game $(N, (S_{i})_{i\in N}, (u_{i})_{i\in N})$ in which both players have access to unitary strategies can be written formally as
% For one-column wide figures use
\begin{equation}\label{qq}
\Gamma_{QQ} = \left(\{1,2\}, \{(U_{1}, U_{3})\}, \{U_{2}\}, \{\mathrm{tr}(\rho_{\mathrm{f}}P), -\mathrm{tr}(\rho_{\mathrm{f}}P)\}\right),
\end{equation}
where
\begin{itemize}
\item $\{1,2\}$ is a set of players,
\item $(U_{1}, U_{3})$ and $U_{2}$ are strategies of player 1 and 2, respectively, and $U_{j}$ is a $2\times 2$ unitary matrix for each $j$,
\item  $\rho_{\mathrm{f}}$ is a density matrix defined as follows
\begin{equation}
\rho_{\mathrm{f}} = U_{3}U_{2}U_{1}|0\rangle \langle 0| U^{\dagger}_{1}U^{\dagger}_{2}U^{\dagger}_{3},
\end{equation} 
\item $P$ is a Hermitian operator in the form
\begin{equation}\label{hermitianp}
P = |0\rangle \langle 0| - |1\rangle \langle 1|.
\end{equation}
\end{itemize}
Let us denote by $\mathds{1}$ the identity matrix of size 2, and by $\sigma_{i}$, $i=x,y,z$, the Pauli matrix $i$.
It follows easily that game~(\ref{classic}) is a special case of (\ref{qq}), if the set of unitary actions $U_{j}$ is restricted to the set $\{\mathds{1}, \sigma_{x}\}$. 

We shall use the following notation for the PQ Penny Flip game with unitary actions restricted to $\{\mathds{1}, \sigma_{x}\}$:
\begin{align}\label{notation}
\begin{split}
&\Gamma_{CC} = \left(\{1,2\}, \{(\mathds{1}, \mathds{1}), (\mathds{1}, \sigma_{x}), (\sigma_{x}, \mathds{1}), (\sigma_{x}, \sigma_{x})\}, \{\mathds{1}, \sigma_{x}\}, \{\mathrm{tr}(\rho_{\mathrm{f}}P), -\mathrm{tr}(\rho_{\mathrm{f}}P)\}\right),\\
&\Gamma_{QC} = \left(\{1,2\}, \{(U_{1}, U_{3})\}, \{\mathds{1}, \sigma_{x}\}, \{\mathrm{tr}(\rho_{\mathrm{f}}P), -\mathrm{tr}(\rho_{\mathrm{f}}P)\}\right),\\
&\Gamma_{CQ} = \left(\{1,2\}, \{\mathds{1}\mathds{1}, \mathds{1}\sigma_{x}, \sigma_{x}\mathds{1}, \sigma_{x}\sigma_{x}\}, \{U_{2}\}, \{\mathrm{tr}(\rho_{\mathrm{f}}P), -\mathrm{tr}(\rho_{\mathrm{f}}P)\}\right).
\end{split}
\end{align}
For example, in the game $\Gamma_{CQ}$ player 1 is restricted to use only classical actions whereas player 2's set of actions is the set of $2\times 2$ unitary matrices.
% For two-column wide figures use
%%%%%%%%%%%%%%%%%%%%%%%%%%%%%

One of the main ideas behind the PQ Penny Flip game was to show that Alice can win the game every time she plays against Bob. It is possible if Alice has access to unitary strategies that Bob is not aware of. That is, Alice is fully aware of unitary operators available in the quantum PQ Penny Flip game, Bob is only aware of unitary operations identified with his strategies in the classical PQ Penny Flip game (for example, $\mathds{1}$ and $\sigma_{x}$).

The common example of Alice's winning strategy is playing the Hadamard matrix $H$ twice: 
\begin{equation}\label{awexample}
|0\rangle \xrightarrow[\text{Alice}]{H} \frac{1}{\sqrt{2}}(|0\rangle + |1\rangle) \xrightarrow[\text{Bob}]{\mathds{1}~\text{or}~\sigma_{x}} \frac{1}{\sqrt{2}}(|0\rangle + |1\rangle) \xrightarrow[\text{Alice}]{H} |0\rangle.
\end{equation}
Starting from the state $|0\rangle$ (which is identified with the coin heads up), Alice creates the equal superposition state $(|0\rangle + |1\rangle)/\sqrt{2}$ by using the operator $H$. Bob having only $\mathds{1}$ and $\sigma_{x}$ cannot affect that superposition state. For that reason, Alice again chooses the Hadamard matrix and then she gets the state $|0\rangle$ back. From (\ref{awexample}) it follows  that for any Bob's mixed strategy $(p,1-p)$ over $\{\mathds{1}, \sigma_{x}\}$ Alice wins the game by playing $HH$, i.e.,
\begin{equation}\label{mixedaw}
\operatorname{tr}\left(\left(pH\mathds{1}H|0\rangle \langle 0|H\mathds{1}H + (1-p)H\sigma_{x}H|0\rangle \langle 0|H\sigma_{x}H\right)P\right)  = \operatorname{tr}(|0\rangle \langle 0|P)= 1.
\end{equation}
%%%%%%%%%%%%%%%%%%%%%%%%%%
\subsection{Technical difficulties in describing PQ Penny Flip problem}\label{techsection}
We know from (\ref{mixedaw}) that Alice can win the PQ Penny Flip game if she has access to the Hadamard matrix, and Bob is not aware of unitary matrices except $\mathds{1}$ and $\sigma_{x}$. A natural question arises as to how this problem can be described from a game theory point of view.

At first glance, the following strategic-form game seems to express that problem:
\begin{equation}\label{misleading}
\bordermatrix{& \mathds{1} & \sigma_{x} \cr
\mathds{1}\mathds{1} & (1,-1) & (-1,1) \cr 
\mathds{1}\sigma_{x} & (-1,1) & (1,-1) \cr
\sigma_{x}\mathds{1} & (-1,1) & (1,-1)\cr 
\sigma_{x}\sigma_{x} & (1,-1) & (1,-1) \cr
\mathds{1}H & (0,0) & (0,0) \cr   
\sigma_{x}H & (0,0) & (0,0) \cr 
H\mathds{1} & (0,0) & (0,0) \cr 
H\sigma_{x} & (0,0) & (0,0) \cr 
HH & (1,-1) & (1,-1)}.
\end{equation}
We infer from bimatrix game (\ref{misleading}) that Alice has an additional move $H$ compared with her actions in the classical PQ Penny Flip game, and now she has eight strategies. Bob has two strategies equivalent to ones in the classical game. Moreover, looking at (\ref{misleading}) we see that $HH$ is Alice's winning strategy. 

Up to now, (\ref{misleading}) appears to agree with the PQ Penny Flip problem.  In fact, (\ref{misleading}) turns out to provide Bob with much richer description of the game than he actually has. Making his strategic decision based on (\ref{misleading}), Bob finds that Alice has the additional action $H$, and consequently the winning strategy. Perhaps, Bob does not know that $H$ is the Hadamard matrix or he does not even realize that he is to play the quantum game. However, Bob knows that he looses the game. According to \cite{meyer}, Bob agrees to play the PQ Penny Flip game because he is confident that the odds of winning the game are even, and his optimal strategy is to play $\mathds{1}$ and $\sigma_{x}$ with equal probability. In the case of (\ref{misleading}) Bob gets the payoff of -1, no matter which strategy he chooses.  Therefore, Bob's optimal strategy in (\ref{misleading}) is any probability distribution over his set of strategies. 

The solution is to consider a family of games--the core of the definition of games with unawareness. The formal definition can take into account a player's view about his strategy set or strategies of the other players, a player's view about other players' views, and even a player's view about the number of players taking part in the game.
 %%%%%%%%%%%%%%%%%%%%%%%%%%%%%%
%%%%%%%%%%%%%%%%%%%%%%%%%%%%%%%%%%%%%%%%%%%%%%%%%%%
\section{Preliminaries on games with unawareness}
For the convenience of the reader we review the relevant material from \cite{feinberg}. Before we begin the formal presentation, we will look at an example that illustrates that concept and the ideas behind it. The reader who is not familiar with this topic is encouraged to see a similar introductory example in \cite{feinberg}.
%%%%%%%%%%%%%%%%%%%%%%%%%%%%%%%
\begin{example}\label{introductoryexample}
Let us consider the following bimatrix game
\begin{equation}\label{bimatrix1}
\Gamma_{1}\colon ~\bordermatrix{& b_{1} & b_{2} \cr 
a_{1} & (2,2) & (2,2) \cr 
a_{2} & (3,3) & (1,2) \cr
a_{3} & (4,0) & (1,2)
}.
\end{equation}
We assume that Alice (player 1) and Bob (player 2) are both aware of all the strategies available in game (\ref{bimatrix1}). However, we consider the situation where Bob finds that Alice views the game in the following form:
\begin{equation}\label{bimatrix2}
\Gamma_{2}\colon \bordermatrix{& b_{1} & b_{2} \cr 
a_{1} & (2,2) & (2,2) \cr 
a_{2} & (3,3) & (1,2) 
}.
\end{equation}
In words, Bob perceives Alice's strategy set to be $\{a_{1}, a_{2}, a_{3}\}$, but for some reason, she thinks that Alice views $\{a_{1}, a_{2}\}$. Since Bob finds that Alice views the game being played as depicted in (\ref{bimatrix2}), Bob thinks that Alice finds that he also considers (\ref{bimatrix2}), and so on for higher order views.

Let us consider the case that Alice is fully aware of Bob's reasoning. Not only does she perceive her whole strategy set $\{a_{1}, a_{2}, a_{3}\}$, Alice also finds that Bob does not realize that she is considering $\{a_{1}, a_{2}, a_{3}\}$ but $\{a_{1}, a_{2}\}$. Moreover, Alice finds that Bob views the game as in (\ref{bimatrix1}). 

The problem just presented is an example of a strategic-form game with unawareness that can be formally described by a family of games $\{G_{v}\}_{v\in \mathcal{V}_{0}}$, where $\mathcal{V}_{0} = \{\emptyset, 1, 2, 12, 121, \dots\}$, and
\begin{equation}\label{formula11}
G_{v} = \begin{cases}
\Gamma_{1} &\text{if}~v\in \{\emptyset, 1, 2, 12\}, \\
\Gamma_{2} &\text{otherwise}.
\end{cases}
\end{equation}
The set $\mathcal{V}_{0}$ (with typical element $v$) consists of the relevant views. The view $v=\emptyset$ corresponds to the modeler's game--the actual game played by the players. In our example, this is game (\ref{bimatrix1}). That game is also viewed by player 1 ($v=1$) and player 2 ($v=2$). Furthermore, according to the description of the game, player 1 (Alice) finds that player 2 (Bob) is considering $\Gamma_{1}$. It is taken into account in (\ref{formula11}) by associating $\Gamma_{1}$ with the view $v =12$ (the view that player 1 finds that player 2 is considering...). In our example, player 2 finds that player 1 views the game as in (\ref{bimatrix2}). For this reason, the game $\Gamma_{2}$ corresponds to $v=21$. Any higher order iteration of awareness of Alice and Bob are also assumed to be associated with $\Gamma_{2}$. 

We see at once that game (\ref{bimatrix1}) has the unique pure Nash equilibrium $(a_{1}, b_{2})$, and we could check that, in general, the set of all (mixed) Nash equilibria in (\ref{bimatrix1}) is
\begin{equation}\label{setnash}
\left\{(a_{1}, (q,1-q))\colon q \in \left[0, \frac{1}{3}\right]\right\},
\end{equation}
where $(q,1-q)$ denotes player 2's mixed strategy under which he chooses $b_{1}$ and $b_{2}$ with probability $q$ and $1-q$, respectively. Each of the strategy profiles from (\ref{setnash}) yields the payoff outcome $(2,2)$. Although, both players are aware of playing (\ref{bimatrix1}), it is not evident that the game ends with outcome $(2,2)$. According to (\ref{formula11}), Bob finds that Alice perceives game (\ref{bimatrix2}). Hence, he may deduce that Alice plays according to strategy profile $(a_{2}, b_{1})$ as being the most profitable Nash equilibrium in (\ref{bimatrix2}). Bob's choice would be then $b_{1}$. Alice, however, is aware of Bob's thinking. She finds that Bob is considering (\ref{bimatrix1}), and also finds that Bob finds that she is considering (\ref{bimatrix2}). Alice can therefore deduce that Bob chooses strategy $b_{1}$ that weakly dominates $b_{2}$ in (\ref{bimatrix2}), i.e., it gives Bob a payoff at least as high as $b_{2}$, and at the same time, is an element of the most beneficial Nash equilibrium in (\ref{bimatrix2}). Since Alice is aware of playing $\Gamma_{1}$, it is not optimal for her to play according to $(a_{2}, b_{1})$ but to choose $a_{3}$. As a result, the game described above ends with payoff outcome (4,0) corresponding to the strategy profile $(a_{3}, b_{1})$. 

The game result $(a_{3}, b_{1})$ can be directly determined by the extended Nash equilibrium \cite{feinberg} - a solution concept being a counterpart of Nash equilibrium in games with unawareness. The formal definition is presented in Subsection~\ref{enesubsection}. Here we simply provide the result of applying the extended Nash equilibrium to (\ref{formula11}). One of the equilibrium solutions is a family of strategy profiles $((\sigma)_{v})_{v\in \mathcal{V}_{0}}$ defined as follows:
\begin{equation}\label{firstene}
\sigma_{v} = \begin{cases}
(a_{3}, b_{1}) &\text{if}~v\in \{\emptyset,  1\}, \\
(a_{2}, b_{1}) &\text{otherwise}.
\end{cases}
\end{equation}
The strategy profiles (\ref{firstene}) coincide with the reasoning we already used to determine the outcome $(a_{3}, b_{1})$. The result of the game corresponds to the modeler's view $(v=\emptyset)$. It also coincides with Alice's view $(v=1)$ as Alice is fully aware of the games played by her and Bob. The strategy profile $(a_{2}, b_{1})$ is seen from Bob's point of view $(v=2)$. Since Alice is aware of Bob's thinking, she finds that Bob is considering $(a_{2}, b_{1})$ $(v=12)$. 
\end{example}
%%%%%%%%%%%%%%%%%%%%%%%%%%%%%%%%
\subsection{The role of the notion of games with unawareness in quantum game theory}
The notion of games with unawareness is designed to model game theory problems in which players' perceptions of the game are restricted. It was shown in \cite{feinberg} that the novel structure extends the existing forms of games. Although it is possible to represent games with unawareness with the use of games with incomplete information, the extended Nash equilibrium does not map to any known solution concept of incomplete information games. In particular, the set of extended Nash equilibria forms a strict subset of the Bayesian Nash equilibria. 

Once we know that games with unawareness is a new game form, it is natural to study that type of games in the quantum domain. Having given a quantum game scheme that maps a classical game $G$ to the quantum one $Q(G)$, and having given a family of games $\{G_v\}$, a family of quantum games ${Q(G_v)}$ can be constructed in a natural way. Then we can study if, and to what extent, quantum strategies compensate restricted perception of players.

Besides $\{Q(G_{v})\}$, the notion of game with unawareness allows one to expand the theory of quantum games by defining a family $\{Q(G)_{v}\}$, where each quantum game $Q(G)_{v}$ corresponds to a specific perception of players. In this case players may have restricted perception of how a quantum game is defined. A good example of that quantum game theory problem is the quantum PQ Penny Flip game \cite{meyer}: one of the players is aware of having all the quantum strategies, the other player perceives two unitary strategies identified with the classical Penny Flip game. We provide a detail exposition of that problem in Section~\ref{section4}. 

Another example of applying the notion of games with unawareness concerns the case when playing a quantum game is not common knowledge among the players. The quantum game is to be played with the aid of object that behave according to the laws of quantum mechanics, in particular, the players may share an entangled two-qubit state on which they apply unitary strategies. Under this scenario (see figure~\ref{fig:00}), Alice and Bob can be far apart, and a third party, say a modeler, is to prepare the game. After the modeler prepares the quantum game based on its classical counterpart, he sends the message to Alice and Bob so that they know they are to play the quantum game rather that the classical one. When the players receive the message, they each perceive the game as being quantum, i.e., $G_{i} = \Gamma_{Q}$. But this fact is not common knowledge among Alice and Bob. Recall that a fact is common knowledge among the players of a game if for any finite sequence of players $i_{1}, i_{2}, \dots, i_{k}$ player $i_{1}$ knows that player $i_{2}$ knows \dots that player $i_{k}$ knows the fact. In our case, each of the players cannot be certain that the other player finds the quantum game (receives the message from the modeler) until he or she receives a confirmation from that player. According to the scheme in Fig.~\ref{fig:00}, Alice sends   Bob a message about her current state of knowledge. In this way, Bob receiving the message finds that Alice is considering the quantum game, i.e., $G_{21} = \Gamma_{Q}$. Now, Bob sends the feedback message including his state of knowledge.   Owing to this message, Alice finds that Bob is considering the quantum game, $G_{12} = \Gamma_{Q}$. Moreover, Alice finds that Bob finds that Alice is considering the quantum game, $G_{121} = \Gamma_{Q}$. At this point, the quantum game is still not considered common knowledge. Bob is not certain that Alice finds that Bob is considering the quantum game until he receives the second message from Alice. Since the game starts before the message arrives at Bob, at the time of the play, either the classical game $\Gamma_{C}$ or the quantum game $\Gamma_{Q}$ may be associated with $G_{212}$, and the same conclusion can be drawn for the higher levels of views. As a result, the players face a game with unawareness described by a family of games $\{G_{v}\}$ rather than the single game $\Gamma_{Q}$. An example of the game being in line with the scheme in Fig.~\ref{fig:00} is a family $\{G_{v}\}_{v\in \mathcal{V}_{0}}$, where
\begin{equation}
G_{v} = \begin{cases} 
\Gamma_{Q} &\text{if}~v\in \{\emptyset, 1, 2, 12, 21, 121\}, \\
\Gamma_{C} &\text{otherwise}.
\end{cases}
\end{equation}
\begin{figure}
\centering
% Use the relevant command to insert your figure file.
% For example, with the graphicx package use
  \includegraphics[width=0.65\textwidth]{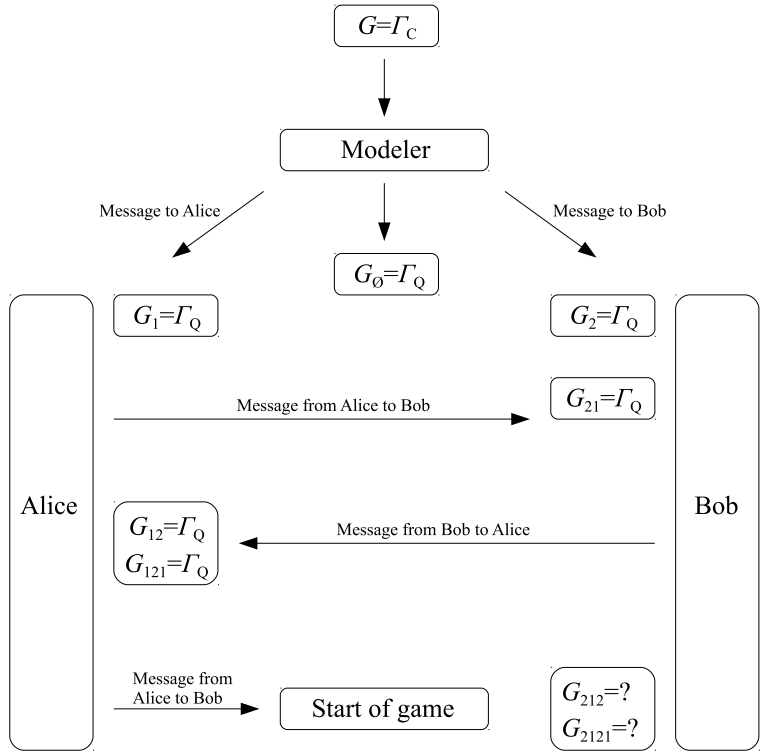}
% figure caption is below the figure
\caption{A possible scenario before a quantum game is played}
\label{fig:00}       % Give a unique label
\end{figure}
We will later see that the result of the game differs significantly depending on how the players perceive the game.
%%%%%%%%%%%%%%%%%%%%%%%%%%%%%%%
\subsection{Strategic-form games with unawareness}
Let $G = \left(N, \prod_{i\in N}S_{i}, (u_{i})_{i\in N}\right)$ be a strategic form game. This is the game considered by the modeler. Each player may not be aware of the full description of $G$.  Hence $G_{\mathrm{v}} = \left(N_{\mathrm{v}}, \prod_{i\in N_{\mathrm{v}}}(S_{i})_{\mathrm{v}}, ((u_{i})_{\mathrm{v}})_{i\in N_{\mathrm{v}}}\right)$ denotes player $\mathrm{v}$'s view of the game for $\mathrm{v} \in N$. In general, each player also considers how each of the other players views the game. Formally, with a finite sequence of players $v = (i_{1}, \dots, i_{n})$ there is associated a game $G_{v} = \left(N_{v}, \prod_{i\in N_{v}}(S_{i})_{v}, ((u_{i})_{v})_{i\in N_{v}}\right)$. This is the game that player $i_{1}$ considers that player $i_{2}$ considers that \dots player $i_{n}$ is considering. A sequence $v$ is called a view. The empty sequence $v = \emptyset$ is assumed to be the modeler's view ($G_{\emptyset} = G$). We denote a strategy profile in $G_{v}$ by $(s)_{v}$. The concatenation of two views $\bar{v} = (i_{1}, \dots, i_{n})$ followed by $\tilde{v} = (j_{1}, \dots, j_{n})$ is defined to be $v = \bar{v}\string^\tilde{v} = (i_{1}, \dots, i_{n}, j_{1}, \dots, j_{n})$. The set of all potential views is $V = \bigcup^{\infty}_{n=0}N^{(n)}$ where $N^{(n)} = \prod^{n}_{j=1}N$ and $N^{(0)} = \emptyset$.
\begin{definition}
A collection $\{G_{v}\}_{v\in \mathcal{V}}$ where $\mathcal{V} \subset V$ is a collection of finite sequences of players is called a strategic-form game with unawareness and the collection of views $\mathcal{V}$ is called its set of relevant views if the following properties are satisfied:
\begin{enumerate}
\item For every $v \in \mathcal{V}$,
\begin{equation}
v\string^\mathrm{v} \in \mathcal{V} ~\mbox{if and only if}~\mathrm{v} \in N_{v}.
\end{equation}
\item For every $v\string^\tilde{v} \in \mathcal{V}$,
\begin{equation}\label{formula19}
v \in \mathcal{V}, \quad \emptyset \ne N_{v\string^\tilde{v}} \subset N_{v}, \quad \emptyset \ne (A_{i})_{v\string^\tilde{v}} \subset (A_{i})_{v} ~\mbox{for all}~i \in N_{v\string^\tilde{v}}
\end{equation}
\item If $v\string^\mathrm{v}\string^\bar{v} \in \mathcal{V}$, then 
\begin{equation}\label{condition37}
v\string^\mathrm{v}\string^\mathrm{v}\string^\bar{v} \in \mathcal{V} ~\mbox{and}~ G_{v\string^\mathrm{v}\string^\bar{v}} = G_{v\string^\mathrm{v}\string^\mathrm{v}\string^\bar{v}}. 
\end{equation}
\item For every strategy profile $(s)_{v\string^\tilde{v}} = \{s_{j}\}_{j\in N_{v\string^\tilde{v}}}$, there exists a completion to an strategy profile $(s)_{v} = \{s_{j}, s_{k}\}_{j\in N_{v\string^\tilde{v}}, k\in N_{v}\setminus N_{v\string^\tilde{v}}}$ such that
\begin{equation}
(u_{i})_{{v\string^\tilde{v}}}((s)_{v\string^\tilde{v}}) = (u_{i})_{v}((s)_{v}).
\end{equation}
\end{enumerate}
\end{definition}
\subsection{Extended Nash equilibrium in strategic-form games with unawareness}\label{enesubsection}
In order to define extended Nash equilibrium it is needed to redefine the notion of strategy profile.
\begin{definition}
Let $\{G_{v}\}_{v\in \mathcal{V}}$ be a strategic-form game with unawareness. An extended strategy profile (ESP) in this game is a collection of strategy (pure or mixed) profiles $\{(\sigma)_{v}\}_{v\in \mathcal{V}}$ where $(\sigma)_{v}$ is a strategy profile in the game $G_{v}$ such that for every $v\string^\mathrm{v}\string^\bar{v} \in \mathcal{V}$ holds
\begin{equation}\label{condition39}
(\sigma_{\mathrm{v}})_{v} = (\sigma_{\mathrm{v}})_{v\string^\mathrm{v}} ~\mbox{as well as}~ (\sigma)_{v\string^\mathrm{v}\string^\bar{v}} = (\sigma)_{v\string^\mathrm{v}\string^\mathrm{v} \string^ \bar{v}}.
\end{equation}
\end{definition}
%%%%%%%%%%%%%%%%
To illustrate (\ref{condition39}) let us take the game $G_{12}$--the game that player 1 thinks that player 2 is considering. If player 1 assumes that player 2 plays strategy $(\sigma_{2})_{12}$ in the game $G_{12}$, she must assume the same strategy in the game $G_{1}$ that she considers, i.e., $(\sigma_{2})_{1} = (\sigma_{2})_{12}$. In other words, player 1 finds that player 2 is considering strategy $(\sigma_{2})_{12}$. Thus, player 1 considers that strategy in her game $G_{1}$.
%%%%%%%%%%%%%%%%
Next step is to extend rationalizability from strategic-form games to the games with unawareness.
\begin{definition}\label{rationalizable}
An ESP $\{(\sigma)_{v}\}_{v\in \mathcal{V}}$ in a game with unawareness is called extended rationalizable if for every $v\string^\mathrm{v} \in \mathcal{V}$ strategy $(\sigma_{\mathrm{v}})_{v}$ is a best reply to $(\sigma_{-\mathrm{v}})_{v\string^\mathrm{v}}$ in the game $G_{v\string^\mathrm{v}}$.
\end{definition}
Consider a strategic-form game with unawareness $\{G_{v}\}_{v\in \mathcal{V}}$. For every relevant view $v \in \mathcal{V}$ the relevant views as seen from $v$ are defined to be $\mathcal{V}^{v} = \{\tilde{v} \in \mathcal{V}\colon v\string^\tilde{v} \in \mathcal{V}\}$. For $\tilde{v} \in \mathcal{V}^{v}$ define the game $G^{v}_{\tilde{v}} = G_{v\string^\tilde{v}}$. Then the game with unawareness as seen from $v$ is defined as $\{G^{v}_{\tilde{v}}\}_{\tilde{v} \in \mathcal{V}^{v}}$.

We are now in a position to define the counterpart of Nash equilibrium in games with unawareness.
\begin{definition}
An ESP $\{(\sigma)_{v}\}_{v\in \mathcal{V}}$ in a game with unawareness is called an extended Nash equilibrium (ENE) if it is rationalizable and for all $v, \bar{v} \in \mathcal{V}$ such that $\{G^{v}_{\tilde{v}}\}_{\tilde{v} \in \mathcal{V}^{v}} = \{G^{\bar{v}}_{\tilde{v}}\}_{\tilde{v} \in \mathcal{V}^{\bar{v}}}$ we have that $(\sigma)_{v} = (\sigma)_{\bar{v}}$.
\end{definition}
%%%%%%%%%%%%%%%%%%%%%%%%%
The first part of the definition (rationalizability) is similar to the standard Nash equilibrium where it is required that each strategy in the equilibrium is a best reply to the other strategies of that profile. According to Definition~\ref{rationalizable}, player 2's strategy $(\sigma_{2})_{1}$ in the game of player 1 has to be a best reply to player 1's strategy $(\sigma_{1})_{12}$ in the game $G_{12}$. On the other hand, in contrast to the concept of Nash equilibrium, $(\sigma_{1})_{12}$ does not have to a best reply to $(\sigma_{2})_{1}$ but to strategy $(\sigma_{2})_{121}$. 

We saw in (\ref{formula11}) of Example 1 that for $v \in \{21, 121, 212, 1212, \dots\}$ we have $G_{v} = \Gamma_{2}$. It follows that $\{G_{21\string^v}\}_{v\in \mathcal{V}_{0}}  = \{G_{121\string^v}\}_{v\in \mathcal{V}_{0}} = \{\Gamma_{2}\}$. According to the second part of ENE, $(\sigma)_{21} = (\sigma)_{121}$.
%%%%%%%%%%%%%%%%%%%%%%%%%

The following proposition is useful to determine the extended Nash equilibria.
\begin{proposition}\label{proposition1}
Let $G$ be a strategic-form game and $\{G_{v}\}_{v \in \mathcal{V}}$ a strategic-form game with unawareness such that for some $v\in \mathcal{V}$ we have $G_{v\string^\bar{v}} = G$ for every $\bar{v}$ such that $v\string^\bar{v} \in \mathcal{V}$. Let $\sigma$ be a strategy profile in $G$. Then
\begin{enumerate}
\item $\sigma$ is rationalizable for $G$ if and only if $(\sigma)_{v} = \sigma$ is part of an extended rationalizable profile in $\{G_{v}\}_{v\in \mathcal{V}}$.
\item $\sigma$ is a Nash equilibrium for $G$ if and only if $(\sigma)_{v} = \sigma$ is part of on an ENE for $\{G_{v}\}_{v\in \mathcal{V}}$ and this ENE also satisfies $(\sigma)_{v} = (\sigma)_{v\string^\bar{v}}$.
\end{enumerate}
\end{proposition}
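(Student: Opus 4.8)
The plan is to reduce the statement to the relationship, already present in \cite{feinberg}, between the solution concepts of an ordinary game $G$ and those of the trivial game with unawareness $\widehat{G}$ in which $G$ is common knowledge: the one whose set of relevant views is $V$ and with $G_w=G$ for every view $w$. Two facts about $\widehat{G}$ will be used: (i) $\sigma$ is rationalizable for $G$ exactly when $(\sigma)_{\emptyset}=\sigma$ is part of an extended rationalizable profile of $\widehat{G}$; (ii) an extended strategy profile of $\widehat{G}$ is an ENE exactly when it is the constant family at some Nash equilibrium of $G$ -- indeed the coincidence clause in the definition of ENE forces the family to be constant (the game with unawareness seen from every view of $\widehat{G}$ is again $\widehat{G}$), and a constant extended rationalizable profile is precisely one in which every player's strategy is a best reply to the rest. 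The bridge to $\{G_v\}_{v\in\mathcal{V}}$ is the observation that the game with unawareness seen from $v$ coincides with $\widehat{G}$: since $G_{v\string^\bar v}=G$ for every $\bar v$ with $v\string^\bar v\in\mathcal{V}$ we have $N_{v\string^\bar v}=N$, so the first (biconditional) requirement in the definition of a strategic-form game with unawareness yields $v\string^\bar v\string^\mathrm{w}\in\mathcal{V}$ for every $\mathrm{w}\in N$; iterating gives $\mathcal{V}^v=V$ and $G^v_{\tilde v}=G_{v\string^\tilde v}=G$ for all $\tilde v\in\mathcal{V}^v$. The same reasoning applies to every view in the cone $C=\{v\string^\bar v:v\string^\bar v\in\mathcal{V}\}$, so the game with unawareness seen from any view in $C$ is $\widehat{G}$, and the restriction of $\{G_v\}_{v\in\mathcal{V}}$ to $C$ \emph{is} $\widehat{G}$.

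To pass from $\{G_v\}_{v\in\mathcal{V}}$ to $G$ I restrict to $C$. If $\{(\sigma)_w\}_{w\in\mathcal{V}}$ is an extended rationalizable profile, resp.\ an ENE, of $\{G_v\}_{v\in\mathcal{V}}$, then $\{(\sigma)_{v\string^\tilde v}\}_{\tilde v\in\mathcal{V}^v}$ is such a profile for $\widehat{G}$: the identifications~(\ref{condition39}), the best-reply conditions of Definition~\ref{rationalizable}, and the coincidence clause, read off $C$, only involve views in $C$ and translate verbatim into the corresponding conditions for $\widehat{G}$. Its value at the empty view is $(\sigma)_{v\string^\emptyset}=(\sigma)_v=\sigma$, so part~1 follows from~(i). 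For part~2 one first notes that, since the game with unawareness seen from any view of $C$ is $\widehat{G}$, the coincidence clause inside $\{G_v\}_{v\in\mathcal{V}}$ already forces $(\sigma)_{v\string^\bar v}=(\sigma)_v$ for all $\bar v$ with $v\string^\bar v\in\mathcal{V}$ (so the extra requirement that the ENE satisfy $(\sigma)_v=(\sigma)_{v\string^\bar v}$ is in fact automatic); hence the restriction is the constant family at $\sigma$, and~(ii) gives that $\sigma$ is a Nash equilibrium of $G$.

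Conversely, to pass from a solution of $G$ to one of $\{G_v\}_{v\in\mathcal{V}}$ I extend. Starting from a witness for $G$ -- an extended rationalizable profile of $\widehat{G}$ with value $\sigma$ at $\emptyset$ in part~1, the constant family $\sigma$ in part~2 -- I install it on the cone $C$ and then fill in the remaining views: those on the finite spine $\emptyset,(i_1),(i_1,i_2),\dots,v$ and on the subtrees branching off the spine and off $C$. Along the spine the assignment is largely pinned by~(\ref{condition39}) (the last coordinate of $v$ fixes the matching component one step up, and so on); the other components and the branching subtrees are completed by choosing best replies, exactly as in the argument that establishes existence of extended rationalizable profiles and of ENE in \cite{feinberg}, with the periodicity~(\ref{condition37}) handling infinite depth. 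One has to check that the pinned boundary values are compatible with extended rationalizability and with the coincidence clause; this is where the precise notion of rationalizability for $G$ from \cite{feinberg} enters in part~1, and it is immediate in part~2 since every component of a Nash equilibrium is a best reply to the others.

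I expect this extension step to be the main obstacle: it is a fixed-point / backward-induction construction over a possibly infinite tree of views in which the values on $C$ are pinned in advance, and the delicate part is keeping the identifications~(\ref{condition39}), the periodicity~(\ref{condition37}) and the ENE coincidence clause mutually consistent with those pinned values. By contrast, once the observation that the game with unawareness seen from $v$ is $\widehat{G}$ is in place, the restriction direction and the reduction to~(i)--(ii) are bookkeeping.
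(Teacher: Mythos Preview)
The paper does not prove Proposition~\ref{proposition1}; it is stated without proof as a tool imported from \cite{feinberg}, and the paper immediately moves on to a remark about restricting attention to $\mathcal{V}_{0}$. So there is no ``paper's own proof'' to compare against.

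As for your argument on its own merits: the reduction is sound. The key observation---that under the hypothesis the game with unawareness seen from $v$ (and from every view in the cone $C=\{v\string^\bar v\}$) is the common-knowledge game $\widehat{G}$---is correct and does all the work in the restriction direction. Your remark that the clause $(\sigma)_{v}=(\sigma)_{v\string^\bar v}$ in part~2 is automatic for any ENE is a genuine sharpening of the statement, and it follows exactly as you say from the coincidence clause together with $\{G^{v\string^\bar v}_{\tilde v}\}=\widehat{G}$ for all $\bar v$.

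The only place I would press you is the extension step, which you yourself flag. Saying ``complete by choosing best replies, as in the existence proof in \cite{feinberg}'' is the right idea, but you should check explicitly that the pinned values on $C$ do not conflict with the ENE coincidence clause \emph{outside} $C$: if some view $w\notin C$ happens to satisfy $\{G^{w}_{\tilde v}\}=\widehat{G}$, the coincidence clause forces $(\sigma)_{w}=(\sigma)_{v}=\sigma$, and you need the best-reply conditions at the parent of $w$ to be satisfiable with that value imposed. In part~2 this is harmless because $\sigma$ is a Nash equilibrium and every component is already a best reply; in part~1 rationalizability of $\sigma$ is exactly what makes the boundary compatible. Making that dependency explicit would close the sketch.
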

\begin{remark}
We see from (\ref{condition37}) and (\ref{condition39}) that for every $v\string^\mathrm{v}\string^\bar{v} \in \mathcal{V}$ a normal-form game $G_{v\string^\mathrm{v}\string^\bar{v}}$ and a strategy profile $(\sigma)_{v\string^\mathrm{v}\string^\bar{v}}$ determine the games and profiles in the form $G_{v\string^\mathrm{v}\string^\dots \string^\mathrm{v}\string^\bar{v}}$ and $(\sigma)_{v\string^\mathrm{v}\string^\dots \string^\mathrm{v}\string^\bar{v}}$, respectively. Hence, in general, a game with unawareness $\{G_{v}\}_{v\in \mathcal{V}}$ and an extended strategy profile $\{(\sigma)_{v}\}_{v\in \mathcal{V}}$ are defined by $\{G_{v}\}_{v\in \mathcal{V}_{0}}$ and $\{(\sigma)_{v}\}_{v\in \mathcal{V}_{0}}$, where
\begin{equation}
\mathcal{V}_{0} = \{v\in V \mid v=(i_{1}, \dots, i_{n}) ~\mbox{with}~i_{k} \ne i_{k+1} ~\mbox{for all}~k\}.
\end{equation}
Then, we get $\{G_{v}\}_{v\in \mathcal{V}}$ from $\{G_{v}\}_{v\in \mathcal{V}_{0}}$ by setting $G_{\tilde{v}} = G_{v}$ for $v=(i_{1},\dots, i_{n})\in \mathcal{V}_{0}$ and $\tilde{v} = (i_{1}, \dots, i_{k}, i_{k}, i_{k+1}, \dots, i_{n}) \in \mathcal{V}$. For this reason, we restrict ourselves to $\mathcal{V}_{0}$ throughout the paper.
\end{remark}
%%%%%%%%%%%%%%%%%%%%%%%%%%%%%%%%%%%%%%%%%%
%%%%%%%%%%%%%%%%%%%%%%%%%%%%%%%%%%%%%%%%%%
\section{Quantum PQ Penny Flip game with unawareness}\label{section4}
We noted in Subsection~\ref{techsection} that a single bimatrix game does not properly reflect the PQ Penny Flip game adjusted so that Alice can win every time  \cite{meyer}.
We now show that the problem may be regarded as a game with unawareness. 
\begin{example}
Following the description of the game given in Section~\ref{sec:1}, we may assume that the game $\Gamma_{QQ}$ is viewed by the modeler. Next, Alice (player 1)  being aware of quantum strategies may view her game $G_{1}$ as $\Gamma_{QC}$ or $\Gamma_{QQ}$, while Bob (player 2) perceives his game $G_{2}$ as $\Gamma_{CC}$. We can then assume that Alice finds that Bob is considering the classical PQ Penny Flip game, i.e., $G_{12} = \Gamma_{CC}$. Since $G_{2} = G_{12} = \Gamma_{CC}$, it follows from (\ref{formula19}) that any higher order views $v \in \{21, 121, 212, \dots \}$ are associated with  $\Gamma_{CC}$. We thus obtain a game with unawareness $\{G_{v}\}_{v\in \mathcal{V}_{0}}$, where
\begin{equation}
G_{v} = \begin{cases} \Gamma_{QQ} &\mbox{if}~v=\emptyset, \\ \Gamma_{QC} &\mbox{if}~v=1, \\ \Gamma_{CC} &\mbox{otherwise},
\end{cases} ~\mbox{or}~G'_{v} = \begin{cases} \Gamma_{QQ} &\mbox{if}~v=\emptyset, \\ \Gamma_{QQ} &\mbox{if}~v=1, \\ \Gamma_{CC} &\mbox{otherwise}.
\end{cases}
\end{equation}
Having defined a game with unawareness we are able to specify the players' optimal moves by using the notion of extended Nash equilibrium. We let $(\sigma^{c}_{1}, \sigma^{c}_{2})$ stand for a pair of optimal strategies in $\Gamma_{CC}$ (the optimal strategies of game (\ref{classic}) with $I_{j}$ and $X_{k}$ replaced by $\mathds{1}$ and $\sigma_{x}$, respectively). 
By Proposition~\ref{proposition1}, the strategy profile $(\sigma^{c}_{1}, \sigma^{c}_{2})$ is part of ENE for $v\in \mathcal{V}_{0}\setminus\{\emptyset, 1\}$. To determine $(\sigma)_{1} = (\sigma_{1}, \sigma_{2})_{1}$, first note that by the definition of extended strategy profile,
\begin{equation}
(\sigma_{2})_{1} = (\sigma_{2})_{12} = \sigma^c_{2}.
\end{equation}
According to Definition~\ref{rationalizable}, $(\sigma_{1})_{1}$ is a best reply to $(\sigma_{2})_{1} = \sigma^c_{2}$ in game $G_{1} = \Gamma_{QC}$ or $G_{1} = \Gamma_{QQ}$. Since Alice has access to unitary strategies in either case, she has a strategy guaranteeing a payoff of 1. One of the optimal strategies is playing the Hadamard matrix $H$ twice (see Eq.~\ref{mixedaw}).
Thus we can set $(\sigma_{1})_{1} = HH$. 

Determining $(\sigma)_{2} = (\sigma_{1}, \sigma_{2})_{2}$ runs along similar lines. We have $(\sigma_{1})_{2} = (\sigma_{1})_{21} = \sigma^c_{1}$, and $(\sigma_{2})_{2} = \sigma^{c}_{2}$ is a best reply to $(\sigma_{1})_{2} = \sigma^{c}_{1}$ in the game $\Gamma_{CC}$. Finally, (\ref{condition39}) implies that 
\begin{equation}
(\sigma_{1})_{\emptyset} = (\sigma_{1})_{1} \quad \text{and} \quad (\sigma_{2})_{\emptyset} = (\sigma_{2})_{2}.
\end{equation}
In summary, a possible extended Nash equilibrium is of the form
\begin{equation}
(\sigma)_{v} = \begin{cases}
(HH, \sigma^{c}_{2}) &\mbox{if}~ v\in \{\emptyset, 1\},\\
(\sigma^{c}_{1}, \sigma^{c}_{2}) &\mbox{otherwise}.
\end{cases}
\end{equation}
In words, Bob is only aware of $\mathds{1}$ and $\sigma_{x}$ and therefore he preceives a rational result of the game as $(\sigma)_{2} = (\sigma^{c}_{1}, \sigma^c_{2})$. Alice is fully aware of her unitary strategies in the modeler's game $G_{\emptyset}$. Thus, her prediction about a rational strategy profile coincides with the actual final result 
\begin{equation}
(\sigma)_{\emptyset} = ((\sigma_{1})_{1}, (\sigma_{2})_{2}) = (HH, \sigma^{c}_{2}). 
\end{equation}
Applying an ordinary Nash equilibrium to game $\Gamma_{QC}$ would lead us to an incorrect prediction about Bob's optimal strategy in the quantum Penny Flip game. Since $HH$ is Alice's winning strategy in $\Gamma_{QC}$ (see, Eq. (\ref{mixedaw})), every probability distribution over $\{\mathds{1}, \sigma_{x}\}$ is Bob's optimal strategy rather that a single strategy $\sigma^{c}_{2}$.
\end{example}
Our next example concerns a modification of quantum PQ Penny Flip game with finite strategy sets. 
%%%%%%%%%%%%%%%%%%%%%%%%%%%%%%%%%%%%%%%%%%%%%%%%%%%%%%%%%%
\begin{example}\label{eexample1}
Consider strategic-form game (\ref{qq}), where we choose the following strategy sets:
\begin{equation}\label{finitestrategysets}
S_{1} = \{\mathds{1}, \sigma_{x}, H\}^2, \quad S_{2} = \{\mathds{1}, \sigma_{x}, \sigma_{z}\}.
\end{equation}
Then, according to scheme (\ref{qq})-(\ref{hermitianp}), the matrix representation of the game and its reduced form take on the form
\begin{equation}\label{gameexample1}
\bordermatrix{ & \mathds{1} & \sigma_{x} & \sigma_{z} \cr 
\mathds{1}\mathds{1} & 1 & -1 & 1 \cr
\mathds{1}\sigma_{x} & -1 & 1 & -1 \cr 
\mathds{1}H & 0 & 0 & 0 \cr 
\sigma_{x}\mathds{1} & -1 & 1 & -1 \cr
\sigma_{x}\sigma_{x} & 1 & -1 & 1 \cr 
\sigma_{x}H & 0 & 0 & 0 \cr 
H\mathds{1} & 0 & 0 & 0 \cr
H\sigma_{x} & 0 & 0 & 0 \cr
HH & 1 & 1 & -1}, \qquad 
\bordermatrix{& \mathds{1} & \sigma_{x} & \sigma_{z} \cr
s_{1} & 1 & -1 & 1 \cr 
s_{2} & -1 & 1 & -1 \cr
s_{3} & 0 & 0 & 0 \cr
s_{4}  & 1 & 1 & -1}.
\end{equation}
An easy computation shows that the value of the games of (\ref{gameexample1}) is 0. Player 1's optimal strategies in the reduced form are $(1/2, 1/2, 0, 0)$, $(0, 0, 1, 0)$ and $(1/2,0,0,1/2)$ (and any probability distribution over these strategies). Player 2's optimal strategy is $(0, 1/2, 1/2)$. The result so obtained is valid because we tacitly assume that the form of the game is common knowledge among the players.

Let us now modify the game defined by (\ref{qq}) and (\ref{finitestrategysets}) in how player 1 perceives player 2's perception of the game. Suppose that player 1 is unaware that player 2 is aware of actions $H$ and $\sigma_{z}$. On the other hand, we assume that player 2 considers game (\ref{gameexample1}). Furthermore, he knows how player 1 perceives her perception of the game. We can describe this problem formally as a strategic-form game with unawareness $\{G_{v}\}_{v\in \mathcal{V}_{0}}$, where the strategy sets of the players in each $G_{v}$, 
\begin{align}\label{gameexample3}
G_{v} = \left(\{1,2\}, \{S_{1}, S_{2}\}_{v}, \{\mathrm{tr}(\rho_{\mathrm{f}}P), -\mathrm{tr}(\rho_{\mathrm{f}}P)\} \right) 
\end{align}
are as follows:
\begin{equation}\label{esyexample1}
\{S_{1}, S_{2}\}_{v} = \begin{cases}\{\{\mathds{1}, \sigma_{x}, H\}^2, \{\mathds{1}, \sigma_{x}, \sigma_{z}\}\}, & \mbox{if}~v\in \{\emptyset, 1, 2, 21\}\\
\{\{\mathds{1}, \sigma_{x}\}^2, \{\mathds{1}, \sigma_{x}\}\} & \mbox{otherwise}.
\end{cases}
\end{equation}
Let us determine an ENE in the above game. Note first that the game
\begin{equation}
\Gamma_{CC} = \left(\{1,2\}, \{\{\mathds{1}\mathds{1}, \mathds{1}\sigma_{x}, \sigma_{x}\mathds{1}, \sigma_{x}\sigma_{x}\}, \{\mathds{1}, \sigma_{x}\}\}, \{\mathrm{tr}(\rho_{\mathrm{f}}P), -\mathrm{tr}(\rho_{\mathrm{f}}P)\}\right)
\end{equation}
satisfies the assumption of Proposition~\ref{proposition1} for $v=12$ and $v=212$, i.e.,
\begin{equation}\label{conditionene}
G_{12\string^\bar{v}} = G_{212\string^\bar{v}} = \Gamma_{CC}
\end{equation}
for every $\bar{v}$ such that $v\string^\bar{v} \in \mathcal{V}_{0}$. As a result, Nash equilibria in $\Gamma_{CC}$ are part of ENE in a game $\{G_{v}\}_{v\in \mathcal{V}_{0}}$ given by (\ref{gameexample3}) and (\ref{esyexample1}). The matrix forms of $\Gamma_{CC}$  and (\ref{classic}) coincide and so do the optimal strategies. Recall that $(\sigma^{c}_{1}, \sigma^{c}_{2})$ denotes a pair of optimal strategies in $\Gamma_{CC}$.  It follows that $(\sigma)_{v} = (\sigma^{c}_{1}, \sigma^{c}_{2})$
is part of the ENE for $v \in \{12,121,212,\dots\}$. We will now use the notion of extended rationalizability (see, Definition~\ref{rationalizable}) to determine the other profiles of $(\sigma)_{v}$. First, it must be the case that $(\sigma_{2})_{21}$ is a best reply to $(\sigma_{1})_{212} = \sigma^{c}_{1}$ in $G_{212}$. Hence $(\sigma_{2})_{21} = \sigma^{c}_{2}$. Next, $(\sigma_{1})_{21}$ has to be a best reply to $(\sigma_{2})_{211}$ in $G_{211}$. But $(\sigma_{2})_{211} = (\sigma_{2})_{21}$ and $G_{211} = G_{21}$ by Eq.~(\ref{condition37}) and (\ref{condition39}). Therefore, $(\sigma_{1})_{21}$ is part of the ENE if $(\sigma_{1})_{21}$ is a best reply to $(\sigma_{2})_{21} = \sigma^{c}_{2}$ in $G_{21}$. We thus get $(\sigma_{1})_{21} = HH$. As a result, $(\sigma)_{21} = (HH, \sigma^{c}_{2})$. Let us now find $(\sigma)_{2}$. In this case, $(\sigma_{1})_{2} = HH$ is a best reply to $(\sigma_{2})_{21} = \sigma^{c}_{2}$ in $G_{21}$. On the other hand, $(\sigma_{2})_{2}$ that is a best reply to $(\sigma_{1})_{2}$ in $G_{2}$ is $\sigma_{z}$. This gives $(\sigma)_{2} = (HH, \sigma_{z})$. We conclude similarly that $(\sigma)_{1} = (HH, \sigma^{c}_{2})$ and $(\sigma)_{\emptyset} = (HH, \sigma_{z})$. To sum up, the ENE is of the form
\begin{equation}
(\sigma)_{v} = \begin{cases}(\sigma^{c}_{1}, \sigma^{c}_{2}) &\mbox{if}~v\in \{12,121, 212, \dots\}, \\
(HH, \sigma^{c}_{2}) &\mbox{if}~v\in \{1,21\}, \\
(HH, \sigma_{z}) &\mbox{if}~v\in \{\emptyset, 2\}.
\end{cases}
\end{equation}
The ENE predicts that the game with unawareness ends with the payoff result of -1 determined by $(\sigma)_{\emptyset} = (HH, \sigma_{z})$.
\end{example}
The above example shows that incomplete awareness may dramatically affect the result of the game. In what follows we shall show that this is also true in a general setting, where the set of available actions for the players is the set of $2\times 2$ unitary matrices $U(2)$. 
\subsection{Relevant best replies in PQ Penny Flip-type games}
Recall that the unitary matrix $R_{\vec{n}}(\theta)$ corresponding to counterclockwise rotation through an angle $\theta$ about the axis directed along the unit vector $\vec{n} = (n_{x}, n_{y}, n_{z})$ is given by
\begin{equation}
R_{\vec{n}}(\theta) = \cos\frac{\theta}{2}\mathds{1} - i\sin\frac{\theta}{2}(n_{x}\sigma_{x} + n_{y}\sigma_{y} + n_{z}\sigma_{z}).
\end{equation}
In particular, the rotation matrices about the $x$, $y$, and $z$ axes are
\begin{align}\begin{split}
&R_{x}(\theta) = \begin{pmatrix} \cos\frac{\theta}{2} & -i\sin\frac{\theta}{2} \\ -i\sin\frac{\theta}{2} & \cos\frac{\theta}{2}\end{pmatrix},\\
&R_{y}(\theta) = \begin{pmatrix}  \cos\frac{\theta}{2} & -\sin\frac{\theta}{2} \\ \sin\frac{\theta}{2}  & \cos\frac{\theta}{2}\end{pmatrix}, \\
&R_{z}(\theta) = \begin{pmatrix}  e^{-i\theta/2} & 0 \\ 0 & e^{i\theta/2}\end{pmatrix}.
\end{split}
\end{align}
In order state our results we need to apply the following proposition \cite{shende}.
%%%%%%%%%%%%%%%%%%%%%%%%%%%%%%%%%
\begin{proposition}\label{lematnielsen}
Let $\vec{m}, \vec{n} \in \mathds{R}^3$ be unit vectors, $\vec{m} \bot \vec{n}$, and $U\in \mathsf{SU}(2)$. Then one can find real numbers $\beta$, $\gamma$, and $\delta$ such that 
\begin{equation}
U = R_{\vec{n}}(\beta)R_{\vec{m}}(\gamma)R_{\vec{n}}(\delta).
\end{equation}
\end{proposition}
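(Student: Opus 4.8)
The plan is to reduce the statement to the classical Euler-angle factorization of $\mathsf{SU}(2)$ for the fixed axes $\hat{z},\hat{y}$ (the $ZYZ$ decomposition), and then to transport that factorization to the arbitrary orthonormal pair $(\vec{n},\vec{m})$ by conjugating with a suitable element of $\mathsf{SU}(2)$.

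First I would prove the special case $\vec{n}=\hat{z}$, $\vec{m}=\hat{y}$: every $U\in\mathsf{SU}(2)$ equals $R_{z}(\beta)R_{y}(\gamma)R_{z}(\delta)$ for some $\beta,\gamma,\delta\in\mathds{R}$. Multiplying the matrices $R_{z}$ and $R_{y}$ displayed above gives
\[
R_{z}(\beta)R_{y}(\gamma)R_{z}(\delta)=\begin{pmatrix} e^{-i(\beta+\delta)/2}\cos\tfrac{\gamma}{2} & -e^{-i(\beta-\delta)/2}\sin\tfrac{\gamma}{2} \\ e^{i(\beta-\delta)/2}\sin\tfrac{\gamma}{2} & e^{i(\beta+\delta)/2}\cos\tfrac{\gamma}{2}\end{pmatrix}.
\]
Since a generic element of $\mathsf{SU}(2)$ has the form $\left(\begin{smallmatrix} a & b \\ -\bar{b} & \bar{a}\end{smallmatrix}\right)$ with $|a|^{2}+|b|^{2}=1$, I would pick $\gamma\in[0,\pi]$ with $\cos(\gamma/2)=|a|$; if $\sin(\gamma/2)\neq 0$ the arguments of $a$ and of $-b$ determine $(\beta+\delta)/2$ and $(\beta-\delta)/2$, hence $\beta$ and $\delta$, while in the degenerate case $\sin(\gamma/2)=0$ one has $b=0$, $a$ is a unit phase, and $\gamma=0$ together with any $\beta,\delta$ satisfying $\beta+\delta=-2\arg a$ works. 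This settles the base case.

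For general $\vec{m}\perp\vec{n}$, note that $(\vec{n},\vec{m},\vec{m}\times\vec{n})$ is a right-handed orthonormal basis of $\mathds{R}^{3}$ — this is exactly where the orthogonality hypothesis is used — so there is $O\in\mathsf{SO}(3)$ with $O\hat{z}=\vec{n}$ and $O\hat{y}=\vec{m}$. Using the surjective covering homomorphism $\pi\colon\mathsf{SU}(2)\to\mathsf{SO}(3)$ characterized by $V(\vec{a}\cdot\vec{\sigma})V^{\dagger}=(\pi(V)\vec{a})\cdot\vec{\sigma}$, choose $V\in\mathsf{SU}(2)$ with $\pi(V)=O$. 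Because $R_{\vec{a}}(\theta)=\cos\tfrac{\theta}{2}\mathds{1}-i\sin\tfrac{\theta}{2}(\vec{a}\cdot\vec{\sigma})$, conjugation by $V$ yields the intertwining relation $V R_{\vec{a}}(\theta)V^{\dagger}=R_{O\vec{a}}(\theta)$ for every axis $\vec{a}$ and angle $\theta$; in particular $V R_{z}(\theta)V^{\dagger}=R_{\vec{n}}(\theta)$ and $V R_{y}(\theta)V^{\dagger}=R_{\vec{m}}(\theta)$. Then, given $U\in\mathsf{SU}(2)$, I would apply the base case to $V^{\dagger}UV\in\mathsf{SU}(2)$ to obtain $\beta,\gamma,\delta$ with $V^{\dagger}UV=R_{z}(\beta)R_{y}(\gamma)R_{z}(\delta)$, and conjugating by $V$ while inserting $V^{\dagger}V=\mathds{1}$ between consecutive factors gives $U=R_{\vec{n}}(\beta)R_{\vec{m}}(\gamma)R_{\vec{n}}(\delta)$, as required.

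I expect the only genuinely delicate points to be the bookkeeping in the base case — in particular the degenerate sub-case in which $\beta$ and $\delta$ are pinned down only through their sum — and making the intertwining relation precise, i.e. the fact that conjugation by $V$ rotates the axis of a rotation by $\pi(V)$ while leaving the rotation angle unchanged; the remainder is routine matrix algebra. An alternative that avoids invoking the covering map is to build $V$ explicitly as a short product of rotations $R_{\vec{k}}(\phi)$ carrying $\hat{z}$ to $\vec{n}$ and then $\hat{y}$ to $\vec{m}$ within the plane orthogonal to $\vec{n}$, but this is essentially the same argument with $\pi$ written out.
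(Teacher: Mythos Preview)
Your proof is correct. However, there is nothing to compare against: the paper does not prove this proposition at all. It is stated with a citation to \cite{shende} (Shende, Markov, Bullock) and then used as a black box in the proofs of Lemmas~\ref{lemma2}--\ref{lemma4}. So your argument supplies what the paper simply imports from the literature.

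For the record, the approach you take---first establishing the $ZYZ$ Euler decomposition explicitly and then transporting it to the pair $(\vec{n},\vec{m})$ via the covering $\pi\colon\mathsf{SU}(2)\to\mathsf{SO}(3)$ and the intertwining identity $V R_{\vec{a}}(\theta)V^{\dagger}=R_{\pi(V)\vec{a}}(\theta)$---is the standard one, and all steps are sound, including your handling of the degenerate case $\sin(\gamma/2)=0$.
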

We are now in a position to prove the lemmas that determine players' best replies to specific strategies.
Let $|\pm\rangle = (|0\rangle \pm |1\rangle)/\sqrt{2}$. The following lemma is a reformulation of the results appeared in \cite{iqbalgeometric} and \cite{classicalbaka}.
\begin{lemma}\label{lemma2}
The optimal strategy for player 1 in game $\Gamma_{QC}$ is a pair of unitary matrices $(V_{1},V_{3})$ such that 
\begin{equation}\label{a}
V_{1}|0\rangle \langle 0|V^{\dagger}_{1} \in \{|+\rangle \langle +|, |-\rangle \langle -|\}, \quad V_{3} = R_{z}(\alpha)V^{\dagger}_{1}
\end{equation}
for $\alpha \in \mathds{R}$. The matrix representation of $V_{1}$ (up to the global phase factor) is
\begin{equation}\label{lemma1}
V_{1} = \frac{R_{z}(a)}{\sqrt{2}}\left(\begin{array}{rr} \mathrm{e}^{-\mathrm{i}\gamma/2} & -\mathrm{i}\mathrm{e}^{\mathrm{i}\gamma/2}\\ \mathrm{e}^{-\mathrm{i}\gamma/2} & \mathrm{i}\mathrm{e}^{\mathrm{i}\gamma/2}\end{array}\right)
\end{equation}
for $a \in \{-\pi,0\}$ and $\gamma \in \mathds{R}$.
\end{lemma}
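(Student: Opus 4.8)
The plan is to exploit that player~2's action set in $\Gamma_{QC}$ is only $\{\mathds{1},\sigma_{x}\}$, and that the eigenvectors $|+\rangle,|-\rangle$ of $\sigma_{x}$ are fixed up to sign by both of player~2's actions; so player~1 can steer the coin through one of these fixed states and then bring it back to $|0\rangle$. The lemma then amounts to identifying \emph{all} pure strategy pairs $(V_{1},V_{3})$ that accomplish this. First I would record the elementary fact that $\operatorname{tr}(\rho_{\mathrm{f}}P)=\langle 0|\rho_{\mathrm{f}}|0\rangle-\langle 1|\rho_{\mathrm{f}}|1\rangle$ lies in $[-1,1]$ and equals $1$ exactly when $\rho_{\mathrm{f}}=|0\rangle\langle 0|$. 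Since (\ref{mixedaw}) already exhibits $HH$ as a strategy of player~1 securing payoff $1$ against \emph{every} mixed strategy of player~2, the value of $\Gamma_{QC}$ is $1$; because the payoff is affine in player~2's mixing probabilities, a pure strategy $(V_{1},V_{3})$ is optimal for player~1 if and only if it secures payoff $1$ against each of player~2's two pure actions, i.e.
\begin{equation*}
V_{3}U_{2}V_{1}|0\rangle \propto |0\rangle \qquad\text{for } U_{2}\in\{\mathds{1},\sigma_{x}\}.
\end{equation*}

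The next step is to unpack these two conditions. Taking $U_{2}=\mathds{1}$, the requirement $V_{3}V_{1}|0\rangle\propto|0\rangle$ forces $V_{3}V_{1}$ to be diagonal (a unitary that fixes $|0\rangle$ up to a phase must send $|1\rangle$ to a phase multiple of $|1\rangle$), hence $V_{3}V_{1}$ equals $R_{z}(\alpha)$ up to a global phase, so $V_{3}=R_{z}(\alpha)V_{1}^{\dagger}$ (the global phase being immaterial to $\rho_{\mathrm{f}}$); this is the second equality in (\ref{a}). Substituting it into the condition with $U_{2}=\sigma_{x}$ yields $R_{z}(\alpha)V_{1}^{\dagger}\sigma_{x}V_{1}|0\rangle\propto|0\rangle$, i.e.\ $\sigma_{x}V_{1}|0\rangle\propto V_{1}|0\rangle$, which says exactly that the unit vector $V_{1}|0\rangle$ is an eigenvector of $\sigma_{x}$, equivalently $V_{1}|0\rangle\langle 0|V_{1}^{\dagger}\in\{|+\rangle\langle+|,|-\rangle\langle-|\}$ — the first part of (\ref{a}). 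For the converse it suffices to observe that each of $|+\rangle\langle+|$ and $|-\rangle\langle-|$ is invariant under conjugation by $\sigma_{x}$, so when $V_{1}|0\rangle\langle 0|V_{1}^{\dagger}$ is one of them the conjugation by $U_{2}$ does nothing and the final state is $R_{z}(\alpha)V_{1}^{\dagger}\bigl(V_{1}|0\rangle\langle 0|V_{1}^{\dagger}\bigr)V_{1}R_{z}(\alpha)^{\dagger}=|0\rangle\langle 0|$, giving payoff $1$ for either $U_{2}$. Thus (\ref{a}) characterizes precisely the optimal pairs.

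Finally I would turn the condition $V_{1}|0\rangle\langle 0|V_{1}^{\dagger}\in\{|+\rangle\langle+|,|-\rangle\langle-|\}$ into the explicit matrix (\ref{lemma1}). In the $|+\rangle$-case, $V_{1}$ must act as $V_{1}=\mathrm{e}^{\mathrm{i}\lambda}|+\rangle\langle 0|+\mathrm{e}^{\mathrm{i}\mu}|-\rangle\langle 1|$ (and as the analogous expression with $|+\rangle,|-\rangle$ interchanged in the $|-\rangle$-case); writing this as a $2\times 2$ matrix and absorbing a global phase / relabelling the remaining relative phase leaves a single free parameter $\gamma$, and the two cases are exchanged by the left factor $R_{z}(a)$ with $a\in\{-\pi,0\}$ because $R_{z}(-\pi)|+\rangle\propto|-\rangle$. (Alternatively one can invoke Proposition~\ref{lematnielsen} with $\vec{n}=(0,0,1)$, $\vec{m}=(1,0,0)$ to parametrize $V_{1}$ and then impose the eigenvector constraint.) Every step reduces to a one-line $2\times 2$ computation; the only points requiring care are the bookkeeping of global and relative phases and the correct reading of ``optimal'' as securing the game value against player~2's worst-case response, so I do not anticipate a substantive obstacle.
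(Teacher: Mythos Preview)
Your argument is correct and follows essentially the same line as the paper's proof: reduce optimality to the two pure-action conditions $V_{3}U_{2}V_{1}|0\rangle\langle 0|V_{1}^{\dagger}U_{2}^{\dagger}V_{3}^{\dagger}=|0\rangle\langle 0|$ for $U_{2}\in\{\mathds{1},\sigma_{x}\}$, deduce that $V_{1}|0\rangle$ must be an eigenvector of $\sigma_{x}$ and that $V_{3}V_{1}$ is diagonal, then parametrize. The only differences are cosmetic: you extract $V_{3}=R_{z}(\alpha)V_{1}^{\dagger}$ first from the $U_{2}=\mathds{1}$ condition and then substitute (the paper instead combines the two equalities to isolate the $V_{1}$ constraint and recovers $V_{3}$ afterwards), and you write down the matrix form of $V_{1}$ directly from the eigenvector description whereas the paper parametrizes via Proposition~\ref{lematnielsen} and a Bloch-sphere argument; you also make the converse and the value-$1$ justification explicit, which the paper leaves implicit.
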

\begin{proof}
%%%%%%%%%%%%%%%%%%%%%%%%%%%%%%%%%%%%%%%%%%%%%%
Note first that according to the definition of $\Gamma_{QC}$, a mixed strategy of player 2 is represented by a probability distribution $(p,1-p)$ over $\{\mathds{1}, \sigma_{x}\}$. Let $\rho$ be a state corresponding to a result of playing a mixed strategy $(p,1-p)$ by player 2 against a strategy $(U_{1}, U_{3})$ chosen by player 1. Then $\rho$ may be written as
\begin{equation}
\rho = pU_{3}U_{1}|0\rangle \langle 0|U^{\dagger}_{1}U^{\dagger}_{3} + (1-p)U_{3}\sigma_{x}U_{1}|0\rangle \langle 0|U^{\dagger}_{1}\sigma_{x}U^{\dagger}_{3}.
\end{equation}
Let $(V_{1}, V_{3})$ be a strategy of player 1 such that $\mathrm{tr}(\rho P) = 1$ for every mixed strategy of player 2. This clearly forces 
\begin{equation}\label{equalities}
\begin{cases}
V_{3}V_{1}|0\rangle \langle 0|V^{\dagger}_{1}V^{\dagger}_{3} = |0\rangle \langle 0|, \cr
V_{3}\sigma_{x}V_{1}|0\rangle \langle 0|V^{\dagger}_{1}\sigma_{x}V^{\dagger}_{3} = |0\rangle \langle 0|.
\end{cases}
\end{equation}
Combining Eqs.~(\ref{equalities}) we obtain
\begin{equation}
\sigma_{x}V_{1}|0\rangle \langle 0|V^{\dagger}_{1}\sigma_{x} = V_{1}|0\rangle \langle 0|V^{\dagger}_{1}.
\end{equation}
Since the eigenvectors of $\sigma_{x}$ are $|+\rangle$ and $|-\rangle$, and $V_{1}|0\rangle \langle 0|V^{\dagger}_{1} =|\Psi\rangle \langle \Psi|$, where $|\Psi\rangle$ is a unit vector, player 1's optimal action $V_{1}$ in game $\Gamma_{QC}$ satisfies either $V_{1}|0\rangle \langle 0|V^{\dagger}_{1} = |+\rangle \langle +|$ or  $V_{1}|0\rangle \langle 0|V^{\dagger}_{1} =|-\rangle \langle -|$.

In what follows, we derive the matrix representation of $V_{1}$. Let us first consider the case
\begin{equation}\label{1condition}
V_{1}|0\rangle \langle 0| V^{\dagger}_{1} = |+\rangle \langle +|.
\end{equation}
By Proposition~\ref{lematnielsen}, the matrix $V_{1}$ may be written (up to the global phase factor) as
\begin{equation}
V_{1} = \mathrm{e}^{\mathrm{i}\pi/4}R_{z}(\beta)R_{x}(\gamma)R_{z}(\delta).
\end{equation}
Let us determine $\beta$, $\gamma$ and $\delta$ so that equation~(\ref{1condition}) is satisfied. First, note that $R_{z}(\delta)$ has no effect on $|0\rangle \langle 0|$, i.e., $R_{z}(\delta)|0\rangle \langle 0|R^{\dagger}_{z}(\delta) = |0\rangle \langle 0|$. This is because, $R_{z}(\delta)$ corresponds to a counterclockwise rotation through an angle $\delta$ about the $z$-axis, and state $|0\rangle$ is represented by a point on that axis (see Fig~\ref{fig:2}). It follows that $\delta \in \mathds{R}$. We are left with the task of determining $\beta$, $\gamma$. We conclude from equation
\begin{equation}
R_{z}(\beta)R_{x}(\gamma)|0\rangle \langle 0|R^{\dagger}_{x}(\gamma)R^{\dagger}_{z}(\beta) = |+\rangle \langle +|
\end{equation}
that 
\begin{equation}
(\beta, \gamma) \in \{(\pi/2, \pi/2), (-\pi/2, -\pi/2)\}.
\end{equation}
\begin{figure}
% Use the relevant command to insert your figure file.
% For example, with the graphicx package use
  \includegraphics[width=0.9\textwidth]{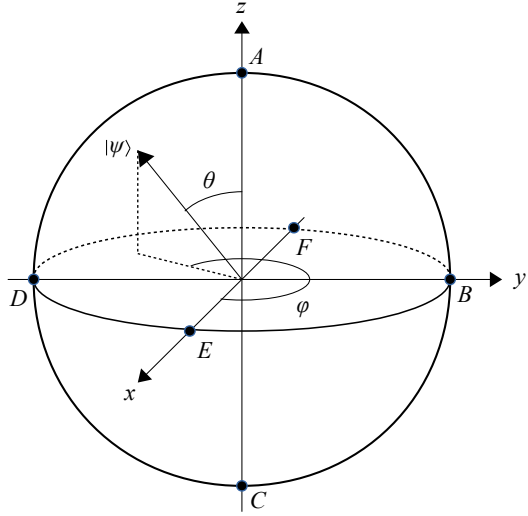}
% figure caption is below the figure
\caption{Bloch sphere representation of a qubit. The points $A$, $B$, $C$, $D$, $E$ and $F$ correspond to $|0\rangle$, $(|0\rangle + i|1\rangle)/\sqrt{2}$, $|1\rangle$, $(|0\rangle - i|1\rangle)/\sqrt{2}$, $(|0\rangle + |1\rangle)/\sqrt{2}$ and $(|0\rangle -|1\rangle)/\sqrt{2}$, respectively.}
\label{fig:2}       % Give a unique label
\end{figure}
Indeed, starting at point $A$ on the Bloch sphere (Fig.~\ref{fig:2}), we have to set $R_{x}(\pi/2)$ or $R_{x}(-\pi/2)$ in order to reach point $E$ by using the rotation matrix about $z$-axis. As a result, we obtain
\begin{align}
\mathrm{e}^{\mathrm{i}\pi/4}R_{z}(\pi/2)R_{x}(\pi/2)R_{z}(\gamma) &= \frac{1}{\sqrt{2}}\left(\begin{array}{rr} \mathrm{e}^{-\mathrm{i}\gamma/2} & -\mathrm{i}\mathrm{e}^{\mathrm{i}\gamma/2}\\ \mathrm{e}^{-\mathrm{i}\gamma/2} & \mathrm{i}\mathrm{e}^{\mathrm{i}\gamma/2}\end{array}\right) \nonumber\\ &=\mathrm{e}^{\mathrm{i}\pi/4}R_{z}(-\pi/2)R_{x}(-\pi/2)R_{z}(\gamma+\pi).
\end{align}
Applying similar reasoning to the case
\begin{equation}\label{2condition}
V_{1}|0\rangle \langle 0| V^{\dagger}_{1} = |-\rangle \langle -|
\end{equation}
leads to 
\begin{equation}
V_{1}= \mathrm{e}^{\mathrm{i}\pi/4}R_{z}(\pi/2)R_{x}(-\pi/2)R_{z}(\gamma)~\mbox{or}~V_{1}= \mathrm{e}^{\mathrm{i}\pi/4}R_{z}(-\pi/2)R_{x}(\pi/2)R_{z}(\gamma).
\end{equation}
One can check that both forms of $V_{1}$ are the same up to $\gamma \in \mathds{R}$. Therefore, equation~(\ref{2condition}) implies
\begin{align}
V_{1} &= \mathrm{e}^{\mathrm{i}\pi/4}R_{z}(-\pi/2)R_{x}(\pi/2)R_{z}(\gamma)\\
&=\frac{1}{\sqrt{2}}\left(\begin{array}{rr} \mathrm{i}\mathrm{e}^{-\mathrm{i}\gamma/2} & \mathrm{e}^{\mathrm{i}\gamma/2}\\ -\mathrm{i}\mathrm{e}^{-\mathrm{i}\gamma/2} & \mathrm{e}^{\mathrm{i}\gamma/2}\end{array}\right) = \frac{1}{\sqrt{2}}R_{z}(-\pi)\left(\begin{array}{rr} \mathrm{e}^{-\mathrm{i}\gamma/2} & -\mathrm{i}\mathrm{e}^{\mathrm{i}\gamma/2}\\ \mathrm{e}^{-\mathrm{i}\gamma/2} & \mathrm{i}\mathrm{e}^{\mathrm{i}\gamma/2}\end{array}\right).
\end{align}
We next turn to determining $V_{3}$. Without restriction of generality we can assume that $V_{1}$ is given by equation~(\ref{1condition}). We deduce from the system of equations~(\ref{equalities}) that 
\begin{equation}
V^{\dagger}_{3}|0\rangle \langle 0|V_{3} = |+\rangle \langle +| = V_{1}|0\rangle \langle 0|V^{\dagger}_{1}.
\end{equation}
Hence the optimal action $V_{3}$ has the form of $V^{\dagger}_{1}$ up to the composition with rotation about the $z$ axis. Thus the general form of $V_{3}$ may be written as $R_{z}(\alpha)V^{\dagger}_{1}$, where $\alpha \in \mathds{R}$.
\end{proof}
As it was mentioned in \cite{meyer}, player 2 being aware of his unitary strategies has a counterstrategy $V_{2}$ to player 1's optimal  strategy played in $\Gamma_{QC}$. The following lemma provides the general form of $V_{2}$.
\begin{lemma}\label{lemmav2}
Player 2's best reply in game $\Gamma_{QQ}$ to strategy $(V_{1}, V_{3})$ given by (\ref{a}) is a unitary matrix $V_{2}$ such that 
\begin{equation}\label{rlemma2}
V_{2}|+\rangle  \langle +|V^{\dagger}_{2} = |-\rangle \langle -|. 
\end{equation}
Its possible matrix representation is
\begin{equation}\label{plusminus0}
V_{2} = \mathrm{e}^{\mathrm{i}\alpha}\left(\begin{array}{ll}\mathrm{i}\cos\frac{\gamma}{2} & -\sin\frac{\gamma}{2}  \\ \sin\frac{\gamma}{2} & -\mathrm{i}\cos\frac{\gamma}{2} \end{array}\right), \gamma \in \mathds{R}.
\end{equation}
\end{lemma}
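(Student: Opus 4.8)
The plan is to work directly with the final state. In $\Gamma_{QQ}$, when player~1 uses $(V_{1},V_{3})$ and player~2 responds with an arbitrary unitary $U_{2}$, the final state is the pure state $\rho_{\mathrm{f}} = |\psi_{\mathrm f}\rangle\langle\psi_{\mathrm f}|$ with $|\psi_{\mathrm f}\rangle = V_{3}U_{2}V_{1}|0\rangle$. Hence player~2's payoff is $-\mathrm{tr}(\rho_{\mathrm f}P) = 1 - 2|\langle 0|\psi_{\mathrm f}\rangle|^{2} \le 1$, with equality precisely when $|\psi_{\mathrm f}\rangle \propto |1\rangle$. So the first step is just to record that a best reply is exactly a $U_{2}$ forcing $V_{3}U_{2}V_{1}|0\rangle \propto |1\rangle$; the existence of such a $U_{2}$ (so that the bound $1$ is indeed attained) will fall out of the next step.

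Next I would use the structure of $(V_{1},V_{3})$ from (\ref{a}). Treat first the case $V_{1}|0\rangle\langle 0|V_{1}^{\dagger} = |+\rangle\langle +|$; the case $V_{1}|0\rangle\langle 0|V_{1}^{\dagger} = |-\rangle\langle -|$ is handled identically with $|+\rangle$ and $|-\rangle$ interchanged and, as noted below, produces the same condition on $U_{2}$. Since $V_{3} = R_{z}(\alpha)V_{1}^{\dagger}$ and $R_{z}(\alpha)$ is diagonal with $|1\rangle$ as an eigenvector, the requirement $V_{3}U_{2}V_{1}|0\rangle \propto |1\rangle$ is equivalent to $V_{1}^{\dagger}U_{2}V_{1}|0\rangle \propto |1\rangle$, hence to $U_{2}V_{1}|0\rangle \propto V_{1}|1\rangle$. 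Because $V_{1}$ is unitary with $V_{1}|0\rangle \propto |+\rangle$, the vector $V_{1}|1\rangle$ is the unit vector (unique up to phase) orthogonal to $V_{1}|0\rangle$, hence proportional to $|-\rangle$. Substituting $V_{1}|0\rangle\propto|+\rangle$ and $V_{1}|1\rangle\propto|-\rangle$ turns the condition into $U_{2}|+\rangle \propto |-\rangle$, which is exactly (\ref{rlemma2}). Since $U_{2}$ is unitary, this is the same as saying $U_{2}$ interchanges the one-dimensional subspaces spanned by $|+\rangle$ and $|-\rangle$, which is also what the $|-\rangle\langle -|$ case yields; it also exhibits an explicit optimal $U_{2}$, confirming that the payoff $1$ is attainable.

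Finally I would extract the matrix form (\ref{plusminus0}). Any $U_{2}$ obeying (\ref{rlemma2}) acts as $U_{2}|+\rangle = \mathrm{e}^{\mathrm{i}\mu}|-\rangle$ and $U_{2}|-\rangle = \mathrm{e}^{\mathrm{i}\nu}|+\rangle$ for some $\mu,\nu \in \mathds{R}$, so $U_{2} = \mathrm{e}^{\mathrm{i}\mu}|-\rangle\langle +| + \mathrm{e}^{\mathrm{i}\nu}|+\rangle\langle -|$. Expanding $|\pm\rangle$ in the computational basis, collecting terms, pulling out the global phase $\mathrm{e}^{\mathrm{i}(\mu+\nu)/2}$ and absorbing one more factor of $\mathrm{i}$ into it leaves precisely the one-parameter family (\ref{plusminus0}) with $\gamma = \mu-\nu$. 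This is a routine calculation; alternatively one could start from Proposition~\ref{lematnielsen} applied with the $x$- and $z$-axes, but the direct parametrization is shorter. I do not expect a serious obstacle: the only points demanding care are the justification that $1$ is the exact optimum and is attained --- which follows from purity of $\rho_{\mathrm f}$ together with the elementary bound $|\langle 0|\psi_{\mathrm f}\rangle|^{2} \ge 0$ --- and the cancellation of $R_{z}(\alpha)$ and of the two occurrences of $V_{1}$, which uses only that $R_{z}(\alpha)$ is diagonal and $V_{1}$ is unitary.
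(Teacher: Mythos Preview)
Your argument is correct. The route differs from the paper's in two places. First, to establish that a best reply achieves $\mathrm{tr}(\rho_{\mathrm f}P)=-1$ and to derive the condition $V_{2}|+\rangle\langle+|V_{2}^{\dagger}=|-\rangle\langle-|$, the paper invokes Meyer's Theorem~2 for existence and then reads off the condition from $\rho_{\mathrm f}=|1\rangle\langle 1|$ via $V_{3}^{\dagger}|1\rangle\langle 1|V_{3}=|-\rangle\langle-|$; you instead bound the payoff directly from purity of $\rho_{\mathrm f}$ and cancel $R_{z}(\alpha)$ and $V_{1}$ to reach $U_{2}|+\rangle\propto|-\rangle$, which is more self-contained and simultaneously delivers existence. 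Second, for the matrix form the paper uses Proposition~\ref{lematnielsen} with the decomposition $V_{2}=R_{x}(\beta)R_{z}(\gamma)R_{x}(\delta)$, exploiting that $R_{x}$ fixes $|\pm\rangle$ up to phase to force $\gamma\in\{-\pi,\pi\}$ and then computing the matrix; you write $U_{2}=\mathrm{e}^{\mathrm{i}\mu}|-\rangle\langle+|+\mathrm{e}^{\mathrm{i}\nu}|+\rangle\langle-|$ and expand, which is shorter and avoids the Euler-angle machinery. The paper's approach has the virtue of methodological uniformity with the proof of Lemma~\ref{lemma2}; yours is more elementary and makes the one-parameter structure of the solution set transparent from the outset.
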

\begin{proof}
Let us assume that $(V_{1}, V_{3})$ satisfies $V_{1}|0\rangle \langle 0|V^{\dagger}_{1} = |+\rangle \langle +|$. 
It follows easily that $V^{\dagger}_{3}|1\rangle \langle 1|V_{3} = |-\rangle \langle -|$.
It was shown in Theorem 2 \cite{meyer} that there exists $V_{2} \in U(2)$ such that 
\begin{equation}\label{trm1}
\mathrm{tr}(\rho_{\mathrm{f}}P) = -1. 
\end{equation}
Then $V_{2}$ is player 2's best reply to $(V_{1}, V_{3})$. From~(\ref{trm1}) we obtain 
\begin{equation}
\rho_{\mathrm{f}} = V_{3}V_{2}V_{1}|0\rangle \langle 0|V^{\dagger}_{1}V^{\dagger}_{2}V^{\dagger}_{3} = |1\rangle \langle 1|.
\end{equation}
This clearly forces
\begin{equation}\label{p1}
V_{2}|+\rangle \langle +|V^{\dagger}_{2} = V^{\dagger}_{3}|1\rangle \langle 1|V_{3} = |-\rangle \langle -|.
\end{equation}
We can apply similar arguments again, with condition $V_{1}|0\rangle \langle 0|V^{\dagger}_{1} = |+\rangle \langle +|$ replaced by $V_{1}|0\rangle \langle 0|V^{\dagger}_{1} = |-\rangle \langle -|$ to obtain
\begin{equation}\label{p2}
V_{2}|-\rangle \langle -|V^{\dagger}_{2} = |+\rangle \langle +|.
\end{equation}
It is obvious that equation~(\ref{p2}) is equivalent to (\ref{p1}).

We will now derive the matrix representation of $V_{2}$. The method is similar to that in the proof of Lemma~\ref{lemma2}. By Proposition~\ref{lematnielsen}, we may write $V_{2}$ (up to the global phase factor) in the following form
\begin{equation}
V_{2} = R_{x}(\beta)R_{z}(\gamma)R_{x}(\delta).
\end{equation}
Now equation~(\ref{rlemma2}) becomes
\begin{equation}\label{equation10}
R_{z}(\gamma)R_{x}(\delta)|+\rangle \langle+|R^{\dagger}_{x}(\delta)R^{\dagger}_{z}(\gamma) = R^{\dagger}_{x}(\beta)|-\rangle \langle -| R_{x}(\beta).
\end{equation}
Since $R_{x}(\delta)$ and $R^{\dagger}_{x}(\beta)$ only affect the global phase factor of $|+\rangle$ and $|-\rangle$, we find that $\delta, \beta \in \mathds{R}$, and equation~(\ref{equation10}) reduces to
\begin{equation}
R_{z}(\gamma)|+\rangle \langle +|R^{\dagger}_{z}(\gamma) = |-\rangle \langle -|.
\end{equation}
It follows that $\gamma \in \{-\pi, \pi\}$. We thus obtain
\begin{equation}\label{plusminus}
V_{2} = \mathrm{e}^{\mathrm{i}\alpha}R_{x}(\beta)R_{z}(-\pi)R_{x}(\delta) = \left(\begin{array}{ll} \mathrm{i}\cos\frac{\beta - \gamma}{2} & -\sin\frac{\beta-\gamma}{2} \\ \sin\frac{\beta-\gamma}{2} & -\mathrm{i}\cos\frac{\beta-\gamma}{2}\end{array}\right).
\end{equation}
Since $\beta$ and $\gamma$ are real numbers, an equivalent formulation of (\ref{plusminus}) is (\ref{plusminus0}).
\end{proof}
The next lemma characterizes player 2's optimal unitary strategy in the game against player 1 equipped with the classical strategies.
\begin{lemma}\label{lemma4}
Player 2's optimal strategy $W_{2} \in U(2)$ in $\Gamma_{CQ}$ is of the form
\begin{equation}\label{u2cp}
W_{2} = \frac{\mathrm{e}^{\mathrm{i}\alpha}}{\sqrt{2}}\left(\begin{array}{ll}\mathrm{e}^{\mathrm{i}(-\beta/2 - \delta/2)} & -\mathrm{e}^{\mathrm{i}(-\beta/2 + \delta/2)} \\ \mathrm{e}^{\mathrm{i}(\beta/2 - \delta/2)} & \mathrm{e}^{\mathrm{i}(\beta/2 + \delta/2)}\end{array} \right).
\end{equation}
\end{lemma}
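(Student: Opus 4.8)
The plan is to follow the pattern of the proofs of Lemmas~\ref{lemma2} and~\ref{lemmav2}: first extract the single condition that an optimal $W_{2}$ must satisfy, and then use Proposition~\ref{lematnielsen} to read off the matrix form~(\ref{u2cp}).

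First I would record the payoff structure of $\Gamma_{CQ}$. Player~1's pure strategies are the pairs $a_{1}a_{3}$ with $a_{1}, a_{3}\in\{\mathds{1},\sigma_{x}\}$, and against $W_{2}$ the final state is $a_{3}W_{2}a_{1}|0\rangle$. Put $a=\langle 0|W^{\dagger}_{2}PW_{2}|0\rangle$ and $b=\langle 1|W^{\dagger}_{2}PW_{2}|1\rangle$. Using $\sigma_{x}|0\rangle=|1\rangle$ and $\sigma_{x}P\sigma_{x}=-P$, the payoffs $\mathrm{tr}(\rho_{\mathrm{f}}P)$ to player~1 for $\mathds{1}\mathds{1},\mathds{1}\sigma_{x},\sigma_{x}\mathds{1},\sigma_{x}\sigma_{x}$ are $a,-a,b,-b$, respectively; and since $a+b=\mathrm{tr}(W^{\dagger}_{2}PW_{2})=\mathrm{tr}(P)=0$, these are $a,-a,-a,a$. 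Hence against a pure $W_{2}$ player~1 can secure exactly $|a|$, while his classical strategy assigning probability $1/2$ to each of $\mathds{1}\mathds{1}$ and $\sigma_{x}\mathds{1}$ yields him payoff $(a+b)/2=0$ against \emph{every} $W_{2}\in U(2)$. Therefore the value of $\Gamma_{CQ}$ is $0$, and $W_{2}$ is optimal for player~2 if and only if $|a|=0$, i.e.
\begin{equation}\label{equatorialcond}
\langle 0|W^{\dagger}_{2}PW_{2}|0\rangle=0,
\end{equation}
which says that $W_{2}$ sends $|0\rangle$ to a state with equal $|0\rangle$- and $|1\rangle$-weights, i.e.\ to an equatorial point of the Bloch sphere.

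Next I would recover the matrix representation. By Proposition~\ref{lematnielsen} with $\vec{n}$ the $z$-axis and $\vec{m}$ the $y$-axis, write (up to a global phase) $W_{2}=\mathrm{e}^{\mathrm{i}\alpha}R_{z}(\beta)R_{y}(\gamma)R_{z}(\delta)$. As in the proof of Lemma~\ref{lemma2}, $R_{z}(\delta)$ fixes $|0\rangle\langle 0|$ (a point on its rotation axis), so $\delta\in\mathds{R}$ is unconstrained; and $R_{z}(\beta)$ maps the equatorial plane onto itself, so~(\ref{equatorialcond}) holds iff $R_{y}(\gamma)$ carries $|0\rangle$ (the north pole) onto the equator, that is, iff $\gamma\in\{\pi/2,-\pi/2\}$. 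Taking $\gamma=\pi/2$ and multiplying out $\mathrm{e}^{\mathrm{i}\alpha}R_{z}(\beta)R_{y}(\pi/2)R_{z}(\delta)$ gives precisely~(\ref{u2cp}); the value $\gamma=-\pi/2$ produces the same family after replacing $(\beta,\delta)$ by $(\beta+\pi,\delta+\pi)$ and absorbing a global sign.

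The matrix computation is routine. The only step requiring genuine care is the reduction to condition~(\ref{equatorialcond}): one must notice that player~1, moving last with a classical bit flip at his disposal, can always undo any $|0\rangle/|1\rangle$ imbalance produced by player~2's unitary, so the best player~2 can do is to create no imbalance at all; the short zero-sum value argument above turns this observation into the stated equivalence. Everything downstream is an application of Proposition~\ref{lematnielsen} of exactly the kind already used in Lemmas~\ref{lemma2} and~\ref{lemmav2}.
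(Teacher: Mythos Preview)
Your argument is correct and ends at the same matrix family via the same $R_{z}R_{y}R_{z}$ decomposition from Proposition~\ref{lematnielsen}, but you derive the optimality condition differently from the paper. The paper parametrises $W_{2}=\mathrm{e}^{\mathrm{i}\alpha}R_{z}(\beta)R_{y}(\gamma)R_{z}(\delta)$ \emph{first}, computes $\mathrm{tr}(\rho_{\mathrm{f}}P)=(2\cos^{2}\frac{\gamma}{2}-1)(1-2p_{2}-2p_{3})$ against an arbitrary mixed strategy of player~1, recognises this as the $\Gamma_{CC}$ payoff with player~2 mixing $(\cos^{2}\frac{\gamma}{2},1-\cos^{2}\frac{\gamma}{2})$, and then imports the known classical optimum $(1/2,1/2)$ to force $\cos\frac{\gamma}{2}=\pm 1/\sqrt{2}$. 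You instead extract the intrinsic condition $\langle 0|W_{2}^{\dagger}PW_{2}|0\rangle=0$ by a short minimax argument that uses only $\sigma_{x}P\sigma_{x}=-P$ and $\mathrm{tr}\,P=0$, and only then parametrise to obtain $\gamma=\pm\pi/2$. Your route is slightly more self-contained (it does not appeal to the classical game's solution) and makes the geometric content---$W_{2}$ must send $|0\rangle$ to the Bloch equator---explicit before any coordinates are chosen; the paper's route has the virtue of exhibiting directly that player~2's quantum strategies achieve nothing beyond classical mixing in $\Gamma_{CQ}$.
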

\begin{proof}
We first determine the general final state $\rho_{\mathrm{f}}$ of $\Gamma_{CQ}$ resulting from playing player 1's mixed strategy (probability distribution $(p_{1}, p_{2}, p_{3},1-p_{1} - p_{2} - p_{3})$ over  $\{\mathds{1}\mathds{1}, \mathds{1}\sigma_{x}, \sigma_{x}\mathds{1}, \sigma_{x}\sigma_{x}\}$) and player 2's unitary strategy $W_{2}$ written in the form 
\begin{equation}
W_{2} = \mathrm{e}^{\mathrm{i}\alpha}R_{z}(\beta)R_{y}(\gamma)R_{z}(\delta).
\end{equation}
We obtain
\begin{align}\label{rhofcq}
\rho_{\mathrm{f}} &= p_{1}\mathds{1}W_{2}\mathds{1}|0\rangle \langle 0|\mathds{1}W^{\dagger}_{2}\mathds{1} + p_{2}\sigma_{x}W_{2}\mathds{1}|0\rangle \langle 0|\mathds{1}W^{\dagger}_{2}\sigma_{x} \nonumber \\
&\quad + p_{3}\mathds{1}W_{2}\sigma_{x}|0\rangle \langle 0|\sigma_{x}W^{\dagger}_{2}\mathds{1} + (1-p_{1} - p_{2} - p_{3})\sigma_{x}W_{2}\sigma_{x}|0\rangle \langle 0|\sigma_{x}W^{\dagger}_{2}\sigma_{x}\nonumber\\
&=\left(\begin{array}{cc} \frac{1}{2}(1+ (1-2p_{2}-2p_{3})\cos\gamma) & \dots \\ \dots &  \frac{1}{2} + \left(-\frac{1}{2} + p_{2} + p_{3}\right)\cos\gamma\end{array}\right).
\end{align}
Therefore, the payoff outcome corresponding to (\ref{rhofcq}) depends only on $\gamma$, and it is equal to
\begin{equation}\label{outcomecq}
\mathrm{tr}(\rho_{\mathrm{f}}P) = \left(2\cos^2\frac{\gamma}{2} - 1\right)(1-2p_{2}-2p_{3}).
\end{equation}
One can check that~(\ref{outcomecq}) coincides with the outcome in $\Gamma_{CC}$ when player 1 uses her mixed strategy $(p_{1}, p_{2}, p_{3}, 1-p_{1} - p_{2} - p_{3})$, and player 2 plays $\mathds{1}$ and $\sigma_{x}$ according to the probability distribution $(\cos^2(\gamma/2), 1- \cos^2(\gamma/2))$. Since player 2's optimal strategy in $\Gamma_{CC}$ is $(q, 1-q) = (1/2, 1/2)$, the value of $\cos(\gamma/2)$ is either  $-1/\sqrt{2}$ or $1/\sqrt{2}$. We thus get
\begin{equation}
W_{2} = \frac{\mathrm{e}^{\mathrm{i}\alpha}}{\sqrt{2}}\left(\begin{array}{ll}\pm\mathrm{e}^{\mathrm{i}(-\beta/2 - \delta/2)} & -\mathrm{e}^{\mathrm{i}(-\beta/2 + \delta/2)} \\ \mathrm{e}^{\mathrm{i}(\beta/2 - \delta/2)} & \pm\mathrm{e}^{\mathrm{i}(\beta/2 + \delta/2)}\end{array} \right).
\end{equation}
Note that the signs associated with the diagonal entries depend on whether we set $\beta$ and $\delta$ or $\beta + \pi$ and $\delta + \pi$. For this reason, the form of $W_{2}$ is (\ref{u2cp}).
\end{proof}
%%%%%%%%%%%%%%%%%%%%%%
\subsection{Examples of extended Nash equilibria}
%%%%%%%%%%%%%%%%%%%%%%
Having determined the relevant best replies in $\Gamma_{QQ}$, $\Gamma_{QC}$ and $\Gamma_{CQ}$ we are now in a position to study the quantum PQ Penny Flip game with unawareness. Consider a family of games $\{G_{v}\}_{v\in \mathcal{V}_{0}}$ where
\begin{equation}\label{qpfg1}
G_{v} = \begin{cases} \Gamma_{QQ} &\mbox{if}~v\in\{\emptyset, 1, 2, 21\}, \\ \Gamma_{CC} &\mbox{if}~v\in\{12, 121, 212, \dots\}. \end{cases}
\end{equation}
Game (\ref{qpfg1}) generalizes the game defined by (\ref{esyexample1}). The set of actions available to the players is now the set of all $2\times 2$ unitary matrices. At the same time, the game has the same structure of unawareness as the game in Example~1. This fact implies that both games have the same structure of ENE.
\begin{proposition}\label{propositionv1v2v3}
Let $\{G_{v}\}_{v\in \mathcal{V}}$ be a game with unawareness defined by (\ref{qpfg1}). The set of extended Nash equilibria is given by the following formula
\begin{equation}\label{proposition1ene}
(\sigma)_{v} = \begin{cases} (\sigma^c_{1}, \sigma^c_{2}) &\mbox{if}~v\in \{12, 121, 212, \dots\},\\
((V_{1}, V_{3}), \sigma^c_{2}) &\mbox{if}~v\in \{1, 21\},\\
((V_{1}, V_{3}), V_{2}) &\mbox{if}~v \in \{\emptyset, 2\}.  \end{cases}
\end{equation}
\end{proposition}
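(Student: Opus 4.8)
The plan is to follow the scheme of Example~\ref{eexample1}, since (\ref{qpfg1}) has exactly the unawareness structure treated there (with the finite strategy sets replaced by $U(2)$), and to plug in the characterisations of relevant best replies from Lemmas~\ref{lemma2} and~\ref{lemmav2}. First I would pin down $(\sigma)_{v}$ on the tail views $v\in\{12,121,212,\dots\}$. On this set $G_{v}=\Gamma_{CC}$, and for $v=12$ (symmetrically for $v=212$, or indeed any view in this set) one has $G_{12\string^\bar{v}}=\Gamma_{CC}$ for every $\bar{v}$ with $12\string^\bar{v}\in\mathcal{V}_{0}$, so Proposition~\ref{proposition1} applies: a profile is part of an ENE on these views iff it is a Nash equilibrium of $\Gamma_{CC}$. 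The game with unawareness as seen from each of $12,121,212,\dots$ is the constant family $\{\Gamma_{CC}\}$, so the equality clause of the ENE definition forces all these profiles to coincide; writing $(\sigma^{c}_{1},\sigma^{c}_{2})$ for the common value (a Nash equilibrium of $\Gamma_{CC}$, with $\sigma^{c}_{2}=(\frac12,\frac12)$ the unique optimal strategy of player~2 there) gives the first line of (\ref{proposition1ene}).

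Next I would push these choices up to the views $v\in\{1,21\}$, where $G_{v}=\Gamma_{QQ}$. Consistency~(\ref{condition39}) forces $(\sigma_{2})_{1}=(\sigma_{2})_{12}=\sigma^{c}_{2}$ and $(\sigma_{2})_{21}=(\sigma_{2})_{212}=\sigma^{c}_{2}$, while extended rationalizability (Definition~\ref{rationalizable}), together with $G_{11}=G_{1}$, $G_{211}=G_{21}$ and (\ref{condition37})--(\ref{condition39}), requires $(\sigma_{1})_{1}$ and $(\sigma_{1})_{21}$ to be best replies of player~1 in $\Gamma_{QQ}$ to the classical mixture $\sigma^{c}_{2}$. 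The crucial observation is that such a best reply is exactly a pair $(V_{1},V_{3})$ as in Lemma~\ref{lemma2}: against $\sigma^{c}_{2}=(\frac12,\frac12)$ over $\{\mathds{1},\sigma_{x}\}$ player~1's payoff equals $\frac12\mathrm{tr}(U_{3}U_{1}|0\rangle\langle0|U^{\dagger}_{1}U^{\dagger}_{3}P)+\frac12\mathrm{tr}(U_{3}\sigma_{x}U_{1}|0\rangle\langle0|U^{\dagger}_{1}\sigma_{x}U^{\dagger}_{3}P)\le1$, and the value $1$ is attained precisely when the system~(\ref{equalities}) holds, which by the proof of Lemma~\ref{lemma2} is equivalent to~(\ref{a}). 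Since the games with unawareness seen from $1$ and from $21$ coincide, the ENE equality clause forces $(\sigma)_{1}=(\sigma)_{21}$, and we obtain the second line of (\ref{proposition1ene}).

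Finally, for the modeler's view and player~2's view I would use $G_{\emptyset}=G_{2}=\Gamma_{QQ}$ and the consistency conditions to get $(\sigma_{1})_{\emptyset}=(\sigma_{1})_{1}=(V_{1},V_{3})$ and $(\sigma_{1})_{2}=(\sigma_{1})_{21}=(V_{1},V_{3})$; then $(\sigma_{2})_{\emptyset}$ and $(\sigma_{2})_{2}$ must be best replies of player~2 in $\Gamma_{QQ}$ to $(V_{1},V_{3})$. Here Lemma~\ref{lemmav2} applies: player~2's payoff against $(V_{1},V_{3})$ is $-\mathrm{tr}(\rho_{\mathrm{f}}P)\le1$, with equality exactly when $\rho_{\mathrm{f}}=|1\rangle\langle1|$, i.e.\ when $V_{2}|+\rangle\langle+|V^{\dagger}_{2}=|-\rangle\langle-|$; Lemma~\ref{lemmav2} guarantees such $V_{2}$ exist and gives their form~(\ref{plusminus0}). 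As the game with unawareness seen from $\emptyset$ equals the one seen from $2$, the ENE equality clause forces $(\sigma)_{\emptyset}=(\sigma)_{2}=((V_{1},V_{3}),V_{2})$, which is the last line of (\ref{proposition1ene}). For the converse, the family so constructed is extended rationalizable by construction and satisfies every required equality between profiles attached to coinciding induced games, hence is an ENE; combining the two directions gives that (\ref{proposition1ene}) is precisely the set of extended Nash equilibria.

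I expect the main difficulty to be not the bookkeeping over views---that is routine once the pattern of Example~\ref{eexample1} is in hand---but the step identifying the relevant \emph{best replies} to a fixed opponent strategy with the \emph{optimal} (maximin) strategies of Lemmas~\ref{lemma2} and~\ref{lemmav2}: one must check that the extremal payoff $1$ is genuinely attainable and that attaining it constrains the strategy exactly to the families~(\ref{a}) and~(\ref{rlemma2}), independently of which optimal classical mixture $\sigma^{c}_{2}$ is used. A minor additional point is to track the residual freedom in $(\sigma^{c}_{1},\sigma^{c}_{2})$, $(V_{1},V_{3})$ and $V_{2}$ and verify that the ENE equality clauses merely tie the choices across coinciding views together rather than singling out a unique one.
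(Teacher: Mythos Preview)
Your proposal is correct and follows essentially the same approach as the paper: reduce the tail views to $\Gamma_{CC}$ via Proposition~\ref{proposition1}, then propagate upward using the consistency condition~(\ref{condition39}) and Definition~\ref{rationalizable}, invoking Lemma~\ref{lemma2} for player~1's best reply to $\sigma^{c}_{2}$ and Lemma~\ref{lemmav2} for player~2's best reply to $(V_{1},V_{3})$. Your write-up is in fact more careful than the paper's proof on two points the paper glosses over: you explicitly justify why a best reply in $\Gamma_{QQ}$ to the classical mixture $\sigma^{c}_{2}$ coincides with the optimal strategy of Lemma~\ref{lemma2} (which is stated for $\Gamma_{QC}$), and you check the equality clause of the ENE definition and the converse direction, whereas the paper simply says ``similar reasoning applies to the other profiles.''
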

\begin{proof}
As in Example~\ref{eexample1}, the game $G = \Gamma_{CC}$ meets condition~(\ref{conditionene}). This fact justifies the first piece of (\ref{proposition1ene}). Let us justify $(\sigma)_{21}$. From (\ref{condition39}) we obtain $(\sigma_{2})_{21} = (\sigma_{2})_{212} = \sigma^c_{2}$. Turning to $(\sigma_{1})_{21}$, by Definition~\ref{rationalizable}, we need to determine player 1's best response to $(\sigma_{2})_{21} = \sigma^c_{2}$ in $G_{21} = \Gamma_{QQ}$. By Lemma~\ref{lemma2}, player 1's optimal strategy to any probability mixture over $\mathds{1}$ and $\sigma_{x}$ is $(V_{1}, V_{3})$. Let us examine the strategy profile $(\sigma)_{2}$. Again, we see from (\ref{condition39}) that $(\sigma_{1})_{2} = (\sigma_{1})_{21} = (V_{1}, V_{3})$. On the other hand, it follows from Lemma~\ref{lemmav2} that $V_{2}$ is player 2's best reply to $(V_{1}, V_{3})$ in $G_{2} = \Gamma_{QQ}$. Similar reasoning applies to the other profiles of (\ref{proposition1ene}).
\end{proof}
We conclude from Proposition~\ref{propositionv1v2v3} that $\{G_{v}\}_{v\in \mathcal{V}_{0}}$ given by (\ref{qpfg1}) favors Player 2. The ENE generates the best possible payoff for player 2,
\begin{equation}
\mathrm{tr}\left(\left(V_{3}V_{2}V_{1}|0\rangle \langle 0| V^{\dagger}_{1}V^{\dagger}_{2}V^{\dagger}_{3}\right)P\right) = -1.
\end{equation}
We now investigate the case where each player perceives that the other player plays $\Gamma_{CC}$. Although, a two-component strategy set may seem to be to player 2's advantage in any zero-sum game (see, Theorem 5.44 in \cite{maschler}), the corresponding ENE does not prejudge the outcome; each player has the chance of getting her most preferred outcome. To be specific, consider $\{G_{v}\}$ defined as follows:
\begin{equation}\label{cqfg1}
G_{v} = \begin{cases}  \Gamma_{QQ} &\mbox{if}~v\in \{\emptyset, 1, 2\}, \\ 
\Gamma_{CC} &\mbox{if}~\mbox{otherwise}.
\end{cases}
\end{equation}
We can formulate the following proposition:
\begin{proposition}
Let $\{G_{v}\}_{v\in \mathcal{V}_{0}}$ be a game with unawareness defined by (\ref{cqfg1}). The set of extended Nash equilibria is given by the following formula
\begin{equation}\label{profilesl4}
(\sigma)_{v} = \begin{cases} ((V_{1}, V_{3}), W_{2}) &\mbox{if}~ v = \emptyset, \\
((V_{1}, V_{3}), \sigma^c_{2}) &\mbox{if}~ v = 1,\\ (\sigma^c_{1}, W) &\mbox{if}~ v = 2, \\ (\sigma^{c}_{2}, \sigma^{c}_{2}) &\mbox{if}~\mbox{otherwise}.\end{cases}
\end{equation}
\end{proposition}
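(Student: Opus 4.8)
The plan is to reuse the scheme applied in Example~\ref{eexample1} and in the proof of Proposition~\ref{propositionv1v2v3}: first locate the block of views on which $\{G_{v}\}$ is common knowledge equal to $\Gamma_{CC}$ and pin down the profiles there by an ordinary Nash-equilibrium argument via Proposition~\ref{proposition1}, and then propagate the resulting constraints outward to the views $1$, $2$ and $\emptyset$ through Definition~\ref{rationalizable} and the extended-strategy-profile equalities~(\ref{condition39}). As usual one may work with $\mathcal{V}_{0}$ and extend by the collapse rule of the Remark.

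For the block of deep views: by~(\ref{cqfg1}) we have $G_{v}=\Gamma_{CC}$ for every view $v$ of length at least two, and the game with unawareness seen from any such $v$ (in particular from $12$ and from $21$) is the single game $\Gamma_{CC}$, common knowledge, so $\Gamma_{CC}$ satisfies the hypothesis of Proposition~\ref{proposition1} at $v=12$ and at $v=21$. The consistency clause of ENE forces $(\sigma)_{12}=(\sigma)_{21}=(\sigma)_{121}=(\sigma)_{212}=\cdots$ to one common profile, which the rationalizability conditions inside the block then require to be a mutual best reply in $\Gamma_{CC}$, i.e.\ a Nash equilibrium of $\Gamma_{CC}$. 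Since $\Gamma_{CC}$ has the zero-sum payoff matrix~(\ref{classic}), its equilibria are exactly $(\sigma^{c}_{1},\sigma^{c}_{2})$ with $\sigma^{c}_{1}$ a convex combination of the distributions~(\ref{cqpf}) and $\sigma^{c}_{2}=(1/2,1/2)$; this is the last line of~(\ref{profilesl4}), and~(\ref{condition39}) then gives $(\sigma_{2})_{1}=(\sigma_{2})_{12}=\sigma^{c}_{2}$ and $(\sigma_{1})_{2}=(\sigma_{1})_{21}=\sigma^{c}_{1}$.

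Next, the views $1$ and $2$. By Definition~\ref{rationalizable}, $(\sigma_{1})_{1}=(\sigma_{1})_{\emptyset}$ must be a best reply to $(\sigma_{2})_{1}=\sigma^{c}_{2}$ in $G_{1}=\Gamma_{QQ}$; since $\sigma^{c}_{2}=\tfrac12\mathds{1}+\tfrac12\sigma_{x}$ and the largest attainable payoff against it is $1$ (witnessed by $HH$, cf.\ (\ref{mixedaw})), the payoffs along the $\mathds{1}$-branch and the $\sigma_{x}$-branch are each at most $1$ and average to $1$, hence each equals $1$, which is precisely the system~(\ref{equalities}) of the proof of Lemma~\ref{lemma2}; so $(\sigma_{1})_{1}=(V_{1},V_{3})$ as in~(\ref{a}) (or a mixture of such pairs), the middle line of~(\ref{profilesl4}). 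For $v=2$, $(\sigma_{2})_{2}=(\sigma_{2})_{\emptyset}$ must be a best reply to $(\sigma_{1})_{2}=\sigma^{c}_{1}$ in $G_{2}=\Gamma_{QQ}$. Here is the one genuinely new point: a mixed strategy of player~1 is optimal in the zero-sum game $\Gamma_{CC}$ exactly when it makes player~2 indifferent, i.e.\ $p_{2}+p_{3}=1/2$ in the notation of the proof of Lemma~\ref{lemma4}; then $1-2p_{2}-2p_{3}=0$, so by~(\ref{outcomecq}) (which is exactly the payoff when player~1 plays a classical mixture against an arbitrary unitary of player~2) player~1's payoff is $0$ against every $W\in U(2)$. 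Player~2 is thus indifferent in $\Gamma_{QQ}$ and any of his strategies $W$ is a best reply, giving $(\sigma)_{2}=(\sigma^{c}_{1},W)$; through $(\sigma_{1})_{\emptyset}=(\sigma_{1})_{1}$ and $(\sigma_{2})_{\emptyset}=(\sigma_{2})_{2}$ this also yields $(\sigma)_{\emptyset}=((V_{1},V_{3}),W)$, the first line of~(\ref{profilesl4}).

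Finally I would check that the displayed families really are ENE and that no further identification is imposed. Rationalizability holds by construction, each component having been chosen as a best reply to the appropriate lower-level profile, and $(\sigma)_{v}=(\sigma^{c}_{1},\sigma^{c}_{2})$ is a Nash equilibrium of $\Gamma_{CC}$ for the deep views. The consistency clause adds nothing, since the games with unawareness seen from $\emptyset$, from $1$ and from $2$ are pairwise distinct, and none coincides with the common-knowledge $\Gamma_{CC}$ seen from the deep views: in $\Gamma_{QQ}$ player~1 moves first and third while player~2 moves between them, so the two players' roles are not interchangeable, and from $\emptyset$ both first-level perceived games are $\Gamma_{QQ}$ whereas from $1$ (resp.\ $2$) the perceived game of the other player is already $\Gamma_{CC}$. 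The only step requiring an actual computation is the indifference fact for $v=2$, which is however read off directly from~(\ref{outcomecq}); everything else reduces to Lemma~\ref{lemma2}, Proposition~\ref{proposition1}, and unwinding the definitions.
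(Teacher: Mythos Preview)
Your approach is essentially the paper's: pin down the common-knowledge block via Proposition~\ref{proposition1}, then propagate outward using~(\ref{condition39}) and Definition~\ref{rationalizable}. The paper's own proof is terse (``analysis similar to Proposition~\ref{propositionv1v2v3}\dots we leave it to the reader''), and you supply the details it omits, including the consistency-clause check at the end.

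The one substantive difference is at $v=2$. The paper invokes Lemma~\ref{lemma4}, which characterizes player~2's \emph{minimax-optimal} strategy in $\Gamma_{CQ}$ as the family~(\ref{u2cp}). You instead read off from~(\ref{outcomecq}) that every optimal $\sigma^{c}_{1}$ has $p_{2}+p_{3}=1/2$, so player~2's payoff is identically~$0$ against \emph{every} unitary, and hence every $W\in U(2)$ is a best reply to $(\sigma_{1})_{2}=\sigma^{c}_{1}$ in $G_{2}=\Gamma_{QQ}$. This is the sharper observation: what rationalizability requires is a best reply, not a minimax strategy, and here the best-reply set is all of $U(2)$, strictly larger than~(\ref{u2cp}). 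Your route therefore makes explicit that the second component at $v\in\{\emptyset,2\}$ is unconstrained, something the paper's appeal to Lemma~\ref{lemma4} does not show (and, read literally, understates). Both arguments establish the displayed profiles as ENE; yours additionally clarifies the full extent of player~2's equilibrium strategies.
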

\begin{proof}
Analysis similar to that in the proof of Proposition~\ref{propositionv1v2v3} shows
that $(\sigma)_{v} = (\sigma^c_{1}, \sigma^c_{2})$ for $v\in \{12, 21, 121, 212, \dots\}$ and $(\sigma_{1})_{2} = \sigma^c_{1}$. By Lemma~\ref{lemma4}, player 2's best reply to $\sigma^c_{1}$ is given by (\ref{u2cp}). We thus obtain $(\sigma)_{2} = (\sigma^c_{1}, W)$. We leave it to the reader to verify the other profiles of (\ref{profilesl4}). \end{proof}
The ENE predicts $(\sigma)_{\emptyset} = ((V_{1}, V_{3}), W_{2})$ in game $\{G_{v}\}$ given by (\ref{cqfg1}). An easy computation shows that 
\begin{equation}\label{zero}
\mathrm{tr}(V_{3}W_{2}V_{1}|0\rangle \langle 0|V^{\dagger}_{1}W^{\dagger}_{2}V^{\dagger}_{3}) = -\sin\beta_{2}\sin\delta_{2}.
\end{equation}
According to (\ref{profilesl4}), player 2 predicts that the result of the game is $(\sigma)_{2} = (\sigma^{c}_{1}, W)$, and so player 2 does not have most-preferred parameters $\beta$ and $\delta$ in $W_{2}$. If we assume that $(\beta, \delta)$ are uniformly distributed over $[0, 2\pi] \times [0, 2\pi]$ then the expected value of (\ref{zero}) is equal to 0. 
%%%%%%%%%%%%%%%%%%%%%%%%%%%%%%%%%%%
\section{Conclusions}
We have shown that the notion of game with unawareness is a necessary tool in studying the quantum PQ Penny Flip game. Different players' perceptions of strategies available in the game require using more sophisticated methods for describing the game and its possible rational results than an ordinary matrix game together with the concept of Nash equilibrium. The examples used in the paper indicate that not only the possibility of using quantum strategies but also incomplete awareness of the players may lead to unpredictable outcomes. This fact undoubtedly sheds new light on quantum game theory.

Our work provides new tools that might be utilized in allied sciences. The obtained results can be generalized to more complex games, and then applied to study numerous economical problems formulated in terms of games with unawareness with the use of mathematical methods of quantum information. At the same time these problems will enrich theory of quantum information through new examples that will show superiority of using quantum methods over methods of classical information theory.

%%%%%%%%%%%%%%%%%%%%%%%%%%%%%%%
%%%%%%%%%%%%%%%%%%%%%%%%%%%%%%%
%%%%%%%%%%%%%%%%%%%%%%%%%%%%%%%%
%%%%%%%%%%%%%%%%%%%%%%%%%%%
%%%%%%%%%%%%%%%%%%
%%%%%%%%%%%%%%%%%%%%%%%%%%
%%%%%%%%%%%%%%%%%%%%%%%%%%%%%%%%%%%%%%%%%%%%%%%%%%%%%%%%%%%%%%%%%%%%%%%%%%%%%%%%%%
%%%%%%%%%%%%%%%%%%%%%%%%%%%%%%%%%%%%%%%%%%%%%%%%%%%%%%%%%%%%%%%%%%%%%%%%%%%%%%%%%%%

%%%%%%%%%%%%%%%%%%%%%%%%%%%
%%%%%%%%%%%%%%%%%

%%%%%%%%%%%%%%%%%%%%%%%
%%%%%%%%%%%%%%%%%%%%%%%

\section*{Acknowledges}
\noindent This work was supported by the National Science Centre, Poland under the research project 2016/23/D/ST1/01557.

%%%%%%%%%%%%%%%%%%%%%%%%%%
%%%%%%%%%%%%%%%%%%%%%%%%%%%%%%%%%%%%%%%%%%%%%%%%%%%%%%%%%%%%%%%%%%%%%%%%%%%%%%%%%%
%%%%%%%%%%%%%%%%%%%%%%%%%%%%%%%%%%%%%%%%%%%%%%%%%%%%%%%%%%%%%%%%%%%%%%%%%%%%%%%%%%%

%%%%%%%%%%%%%%%%%%%%%%%%%%%
%%%%%%%%%%%%%%%%%%%%%%%%%%%%%%%%%%%%%%%%%%%%%

%%%%%%%%%%%%%%%%%%%%%%%%%%%%%%%%%%%%%%%%%%%

\end{document}